\DeclareSymbolFont{letters}     {OML}{ptm}{m}{it}
\DeclareSymbolFont{symbols}     {OMS}{ptm}{m}{n}
\newcommand{\abs}[1]{\lvert #1 \rvert}
\newcommand{\aut}[1]{\ensuremath{\mathcal{#1}}}
\newcommand{\set}[1]{\{ \, #1 \, \}}
\newcommand{\tuple}[1]{\langle \, #1 \, \rangle}
\newcommand{\suchthat}{\; . \;}
\newcommand{\project}{\mathrel{\upharpoonright}}
\newcommand{\where}{\mid}
\newcommand{\isomorphic}{\cong}
\newcommand{\union}{\mathrel{\cup}}
\newcommand{\intersection}{\mathrel{\cap}}
\let\oldland\land
\let\oldlor\lor
\let\oldexists\exists
\let\oldforall\forall
\let\oldnexists\nexists
\renewcommand{\land}{\; \oldland \;}
\renewcommand{\lor}{\; \oldlor \;}
\renewcommand{\exists}{\, \oldexists \,}
\renewcommand{\forall}{\, \oldforall \,}
\renewcommand{\nexists}{\, \oldnexists \,}
\newcounter{ruledefcounter}
\newenvironment{ruledef}[2]%
{\vspace{3pt}\noindent\ignorespaces%
\textbf{Rule R\arabic{ruledefcounter}} %
(#1)%
\textbf{:} %
#2\textbf{.}\begin{itemize}\item[]}
{\end{itemize}\stepcounter{ruledefcounter}\vspace{3pt}\ignorespacesafterend}
\theoremstyle{definition}
\newtheorem{definition}{Definition}[section]
\theoremstyle{plain}
\newtheorem{lemma}[definition]{Lemma}
\newtheorem{proposition}[definition]{Proposition}
\newtheorem{theorem}[definition]{Theorem}
\theoremstyle{remark}
\newtheorem{example}[definition]{Example}
\newtheorem{remark}[definition]{Remark}
\newcommand{\theoremlikeParameterised}[2]{\par\medskip\penalty-250\refstepcounter{definition}{\bfseries\scshape\noindent#1 #2.}\slshape}
\newcommand{\linefill}{%
   \cleaders
   \hbox{$\smash{\mkern-2mu\mathord-\mkern-2mu}$}%
   \hfill
   \vphantom{\lower1pt\hbox{$\rightarrow$}}%
}
\newcommand{\Linefill}{%
   \cleaders
   \hbox{$\smash{\mkern-2mu\mathord=\mkern-2mu}$}%
   \hfill
   \vphantom{\hbox{$\Rightarrow$}}%
}
\newcommand{\xleftnoend}[1][]{\mathrel-_{\vphantom{#1}}\mkern-11mu}
\newcommand{\xLeftnoend}[1][]{\mathrel=_{\vphantom{#1}}\mkern-8mu}
\newcommand{\xmid}[2][]{\stackrel{#2}{\linefill_{\vphantom{#1}}}}
\newcommand{\xMid}[2][]{\stackrel{#2}{\Linefill_{\vphantom{#1}}}}
\newcommand{\xrightArrow}[1][]{\mkern-11mu\rightarrow_{#1}}
\newcommand{\xRightArrow}[1][]{\mkern-8mu\Rightarrow_{#1}}
\newcommand{\xmake}[1]{\mathrel{\lower1pt\hbox{$#1$}}}
\newcommand{\trans}[2][]{\xmake{\xleftnoend[#1]\xmid[#1]{#2}\xrightArrow[#1]}}
\newcommand{\ntrans}[2][]{\centernot{\trans[#1]{#2}}}
\newcommand{\etrans}{\rightarrow}
\newcommand{\Trans}[2][]{\xmake{\xLeftnoend[#1]\xMid[#1]{#2}\xRightArrow[#1]}}
\newcommand{\nTrans}[2][]{\centernot{\Trans[#1]{#2}}}
\tikzstyle{every picture}=[->,>=latex,auto,node distance=1.3cm,thick,initial text=,initial where=above,scale=0.7,transform shape]
\tikzstyle{every state}=[draw=white,line width=4pt,fill=black,minimum size=5pt]
\tikzstyle{loop right}=[in=-30,out=30,looseness=8]
\tikzstyle{loop left}=[in=150,out=210,looseness=8]
\tikzstyle{loop above}=[in=60,out=120,looseness=8]
\tikzstyle{loop below}=[in=240,out=300,looseness=8]
\tikzstyle{loop above right}=[in=5,out=65,looseness=8]
\tikzstyle{loop above left}=[in=105,out=165,looseness=8]
\tikzstyle{loop slightly above left}=[in=125,out=185,looseness=8]
\tikzstyle{loop slightly above right}=[in=5,out=65,looseness=8]
\tikzstyle{loop below left}=[in=195,out=255,looseness=8]
\tikzstyle{loop below right}=[in=285,out=-15,looseness=8]
\tikzstyle{loop slightly below left}=[in=155,out=215,looseness=8]
\tikzstyle{loop slightly below right}=[in=305,out=5,looseness=8]
\newcommand{\AparB}{\ensuremath{\aut{A} \parallel \aut{B}}}
\newcommand{\Lin}{L^{\text{\rm I}}}
\newcommand{\Lout}{L^{\text{\rm O}}}
\newcommand{\Ltau}{L^{\tau}}
\newcommand{\Ldeltatau}{L^{\delta}_{\tau}}
\newcommand{\LA}{L_{\aut{A}}}
\newcommand{\LB}{L_{\aut{B}}}
\newcommand{\LinA}{\Lin_{\aut{A}}}
\newcommand{\LinB}{\Lin_{\aut{B}}}
\newcommand{\LoutA}{\Lout_{\aut{A}}}
\newcommand{\LoutB}{\Lout_{\aut{B}}}
\newcommand{\LAparB}{L_{\AparB}}
\newcommand{\LinAparB}{\Lin_{\AparB}}
\newcommand{\LoutAparB}{\Lout_{\AparB}}
\newcommand{\SA}{S_{\aut{A}}}
\newcommand{\SB}{S_{\aut{B}}}
\newcommand{\SAparB}{S_{\AparB}}
\newcommand{\Sstart}{S^0}
\newcommand{\SstartA}{\Sstart_{\aut{A}}}
\newcommand{\SstartB}{\Sstart_{\aut{B}}}
\newcommand{\transA}[2][]{\trans[#1]{#2}_{\aut{A}}}
\newcommand{\transB}[2][]{\trans[#1]{#2}_{\aut{B}}}
\newcommand{\ntransA}[2][]{\ntrans[#1]{#2}_{\aut{A}}}
\newcommand{\etransA}{\etrans_{\aut{A}}}
\newcommand{\etransB}{\etrans_{\aut{B}}}
\newcommand{\etransAparB}{\etrans_{\AparB}}
\newcommand{\etransDelta}{\etrans_{\delta}}
\newcommand{\etransDet}{\etrans_{\mathrm{d}}}
\newcommand{\etransHide}{\etrans_{\mathrm{h}}}
\newcommand{\deltaf}[1]{\ensuremath{\delta(#1)}}
\newcommand{\deter}[1]{\ensuremath{\mathit{det}(#1)}}
\newcommand{\hide}[1]{\ensuremath{\mathit{hide}(#1)}}
\newcommand{\first}[1]{\ensuremath{\mathit{first}(#1)}}
\newcommand{\last}[1]{\ensuremath{\mathit{last}(#1)}}
\newcommand{\trace}[1]{\ensuremath{\mathit{trace}(#1)}}
\newcommand{\paths}[1]{\ensuremath{\mathit{paths}(#1)}}
\newcommand{\reach}[1]{\ensuremath{\mathit{reach}(#1)}}
\newcommand{\reachA}[1]{\ensuremath{\mathit{reach}_{\aut{A}}(#1)}}
\newcommand{\traces}[1]{\ensuremath{\mathit{traces}(#1)}}
\newcommand{\outs}[1]{\ensuremath{\mathit{out}(#1)}}
\title{Talking quiescence: a rigorous theory that supports parallel composition, action hiding and determinisation}
\author{Gerjan Stokkink, Mark Timmer, and Mari\"elle Stoelinga
  \institute{Formal Methods and Tools, Faculty of EEMCS\\University of Twente, The Netherlands}
  \email{\{w.g.j.stokkink, timmer, marielle\}@cs.utwente.nl}
}
\begin{document}
\maketitle

\begin{abstract}
The notion of quiescence --- the absence of outputs --- is vital in both behavioural modelling and testing theory. Although the need for quiescence was already recognised in the 90s, it has only been treated as a second-class citizen thus far. This paper moves quiescence into the foreground  and introduces the notion of quiescent transition systems (QTSs): an extension of regular input-output transition systems (IOTSs) in which quiescence is represented explicitly, via quiescent transitions. Four carefully crafted rules on the use of quiescent transitions ensure that our QTSs naturally capture quiescent behaviour.

We present the building blocks for a comprehensive theory on QTSs supporting parallel composition, action hiding and determinisation. In particular, we prove that these operations preserve all the aforementioned rules. Additionally, we provide a way to transform existing IOTSs into QTSs, allowing even IOTSs as input that already contain some quiescent transitions. As an important application, we show how our QTS framework simplifies the fundamental model-based testing theory formalised around ioco.
\end{abstract}

\section{Introduction}

Quiescence is a fundamental concept in modelling system behaviour. It explicitly represents the fact that, in certain system states, no output is provided. The absence of outputs is often essential: an ATM, for instance, should deliver the requested amount of money only once, not twice (see Figure~\ref{fig:atm}). This means that the ATM's state just after paying out money ($s_0$ in Figure~\ref{fig:atm}) should be quiescent: it should not produce any output until further input is given. On the other hand, the state before paying out ($s_3$~in Figure~\ref{fig:atm}) should clearly not be quiescent. Hence, quiescence can also sometimes be considered as erroneous behaviour.

Thus, the notion of quiescence is essential in testing: if a system under test (SUT) does not provide any output, then the test evaluation algorithm must decide whether to produce a pass verdict (allowing quiescence at this point) or a fail verdict (forbidding quiescence at this point).

\paragraph{Origins.}

The notion of quiescence was first introduced by Vaandrager in \cite{Vaandrager91} to obtain a
natural extension of the notion of a terminal or blocking state: if a system is input-enabled (i.e., always ready to receive inputs), then no states are blocking, since each state has outgoing input transitions. However, quiescence can still be used to denote the fact that a state would be blocking when considering only the output actions. Quiescence is explored further in \cite{Segala93, Segala97}.

Tretmans introduced the notion of \emph{repetitive quiescence} \cite{Tretmans96, Tretmans96b}, which emerged from the need to continue testing, even in a quiescent state: in the ATM example above, we need to test further behaviour that arises from the (quiescent) state after providing money. To accommodate these needs, Tretmans introduced the \emph{suspension automaton} as an auxiliary concept. More recent uses of quiescence include~\cite{Aarts10}, applying it in the context of machine learning. 

\begin{example}
Consider the automaton given in Figure~\ref{fig:atm}. The states $s_0$ and $s_1$ are quiescent, since they do not have any outgoing output transitions. To obtain the suspension automaton corresponding to such a system, Tretmans adds self-loops, labelled with the quiescence label $\delta$, to each quiescent state.\qed
\begin{figure}
\centering
\begin{tikzpicture}[node distance=4cm]
	\tikzstyle{every state}=[draw=black,line width=1pt,fill=white,minimum size=5pt]

	\node[initial, state] (s0) {\footnotesize $s_0$};
	\node[state] (s1) [right of=s0] {\footnotesize $s_1$};
	\node[state] (s2) [right of=s1] {\footnotesize $s_2$};
	\node[state] (s3) [right of=s2] {\footnotesize $s_3$};

	\path (s0) edge node {insertCard?}  (s1) 
	    (s1) edge node {re\smash{q}uestMone\smash{y}?}  (s2)
		(s2) edge node {returnCard!} (s3)
		(s3) edge [bend left=10] node {pay!} (s0);
\end{tikzpicture}
\caption{A very basic ATM.}
\label{fig:atm}
\end{figure}
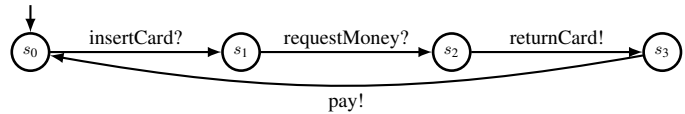
\end{example}

\paragraph{Limitations of current treatments.}

While the papers above all convincingly argued the need for quiescence, none of them presents a comprehensive theory of quiescence. Firstly, quiescence is not treated as a first-class citizen: although the suspension automaton is used during testing, it is not defined as an entity in itself. Therefore, quiescence cannot be used to specify systems, and neither is it clear what properties a suspension automaton satisfies or should satisfy. Since conformance relations such as \texttt{ioco} are defined based on `suspension traces', which are the traces of a suspension automaton, it seems much more appealing to directly start from these suspension automata and base the whole theory on them.

Secondly, basic operators like parallel composition and hiding were only defined for input-output transition systems, but have not been studied for suspension automata at all. Therefore, it was still an open question to what extent these operators could be lifted to the setting of quiescence. 

\paragraph{Our approach.}

The current paper remediates the shortcomings of previous work and presents a comprehensive theory for quiescence, by introducing \emph{quiescent transition systems} (QTSs). These are input-output transition systems in which quiescence can be represented explicitly by $\delta$-transitions, and form a fully-formalised alternative to Tretmans' suspension automata. Whereas suspension automata are always constructed by adding $\delta$-transitions to existing LTSs and subsequently determinising~\cite{Tretmans2008}, QTSs are defined in a precise manner as a stand-alone entity, can be built from scratch and need not necessarily be deterministic. 

As a first step, we handle QTSs that are input-enabled (never reject an input) and most importantly convergent (free of infinite sequences of internal transitions), since the interplay between quiescence and infinite sequences of internal transitions is delicate. Hence, we first focus on the basics. Relaxing these restrictions is an important direction for future work.

Starting point in our theory is the observation that, when treating quiescence as a first-class citizen, restrictions need to be put in place. For instance, it should never be the case that a $\delta$-transition is followed by an output, as this would contradict the meaning of quiescence. As another example, as argued elaborately in Section~\ref{sec:qts}, we do not allow a $\delta$-transition to enable additional behaviour; after all, it would not make much sense if our observation of the absence of outputs impacts the system. In this paper we present and discuss four such rules, that restrict the domain of all possible QTSs to a sensible subclass.

We define three well-known automata-theoretical operations on QTSs: parallel composition, hiding and determinisation. These operations are very important, as they allow a modular approach to system specification. 
Additionally, we explain how to obtain a QTS from an IOTS by a process called deltafication. We define this process in a liberal way, supporting also the construction of a QTS from an IOTS that already has some $\delta$-transitions in place. We show that our four requirements on QTSs, which are a key contribution of this paper, are preserved by all of these operations.

This novel theory of QTSs simplifies the theory of model-based testing. Hence, we conclude this paper by showing how QTSs can be used to define the conformance relation \texttt{ioco}, and aid in test case generation and evaluation.

\paragraph{Overview of the paper.}
First, we present some preliminaries on input-output transition systems in Section~\ref{sec:background}. 
Then, Section~\ref{sec:qts} introduces the QTS model and its operations, as well as a variety of important (closure) properties. Section~\ref{sec:deltafication} explains how to construct QTSs based on IOTSs, 
and Section~\ref{sec:testing} discusses the application of QTSs to test theory. Finally,
conclusions and future work are presented in Section~\ref{sec:conclusions}.

Due to space limitations, we refer to~\cite{techrep} for detailed proofs of all our lemmas, propositions and theorems.

\section{Background}\label{sec:background}

\subsection{Preliminaries}

Given a set $L$, we denote by $L^{*}$ the set of all sequences over $L$.
Given a sequence $\sigma = a_1a_2 \dotso a_n$, we define the length of $\sigma$, denoted $\abs{\sigma}$, as $n$. The empty sequence is denoted by $\epsilon$.

Given two sequences $\rho = a_1a_2 \dotso a_n \in L^*$ and $\upsilon = b_1b_2 \dotso b_k  \in L^*$, we define the concatenation of $\rho$ and $\upsilon$, denoted $\rho + \upsilon$ or $\rho\upsilon$, as $a_1a_2 \dotso a_nb_1b_2 \dotso b_k$. The sequence $\rho$ is a \emph{prefix} of $\upsilon$, denoted $\rho \sqsubseteq \upsilon$, if there is a $\rho' \in L^*$ such that $\rho\rho' = \upsilon$; if $\rho' \neq \epsilon$, then $\rho$ is a \emph{proper prefix} of~$\upsilon$, denoted $\rho \sqsubset \upsilon$.

Given a set $S \subseteq L^{*}$, a sequence $\sigma \in S$ is called \emph{maximal with respect to $\sqsubseteq$} if there does not exist a sequence $\rho \in S$ such that $\sigma \sqsubset \rho$. Clearly, such a maximal sequence always exists.

We use $\wp(L)$ to denote the \emph{power set} of $L$, i.e., $\wp(L)$ is the set of all subsets of $L$, including the empty set and $L$ itself.

\subsection{Input-Output Transition Systems}

Before we introduce Input-Output Transition Systems, we first describe the modelling formalism they are based on: Labelled Transition Systems.

\begin{definition}[Labelled Transition Systems]
A \emph{Labelled Transition System} (LTS) is a quadruple \linebreak $\aut{A} = \tuple{S, S^0, L, \etrans}$, such that:

\begin{itemize}
\item $S$ is a (possibly uncountable) set of states;
\item $S^0 \subseteq S$ is a non-empty set of initial states;
\item $L$ is a set of labels, each representing a different action. We take $\tau \notin L$ to stand for an internal (unobservable) action and define $\Ltau = L \cup \{\tau\}$;
\item $\etrans {} \subseteq S \times \Ltau \times S$ is the transition relation. We use $s \trans{a} s'$  to denote $(s, a, s') \in{} \etrans {}$, write $s \trans{a}$ if there is an $s' \in S$ such that $s \trans{a} s'$, and $s \ntrans{a}$ if this is not the case. If $s \trans{a}$, we say that the action $a$ is \emph{enabled} in state $s$.
\end{itemize}
\end{definition}

We use $\SA$, $\SstartA$, $\LA$ and $\etransA$ to denote the components of an LTS \aut{A}. These subscripts are left out when it is clear from the context which LTS is referred to.

\begin{example}
Figure~\ref{fig:lts_and_iots_example}(a) shows an LTS \aut{A}. \qed

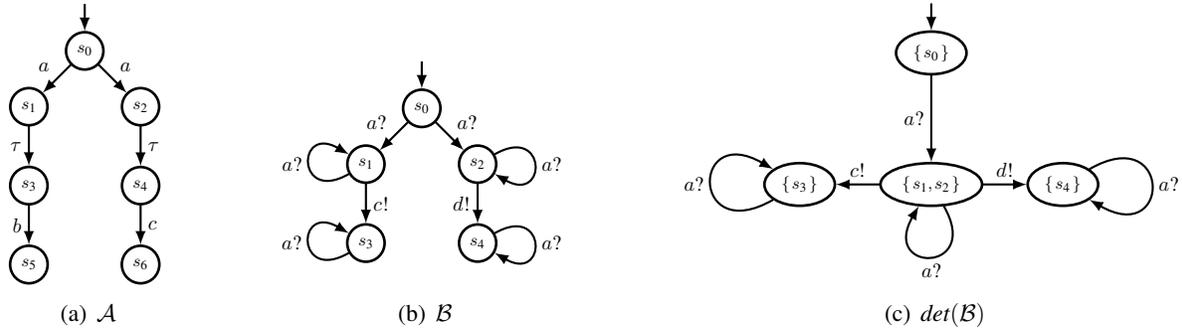
\begin{figure}
\subfigure[\aut{A}]{
\begin{tikzpicture}[node distance=1.5cm]
	\tikzstyle{every state}=[draw=black,line width=1pt,fill=white,minimum size=5pt]

	\node[initial, state] (s0) {\footnotesize $s_0$};
	\node[state] (s1) [below left of=s0] {\footnotesize $s_1$};
	\node[state] (s2) [below right of=s0] {\footnotesize $s_2$};
	\node[state] (s5) [below of=s1] {\footnotesize $s_3$};
	\node[state] (s3) [below of=s5] {\footnotesize $s_5$};
	\node[state] (s4) [below of=s2] {\footnotesize $s_4$};
	\node[state] (s6) [below of=s4] {\footnotesize $s_6$};

	\path (s0) edge node [swap]  {$a$}  (s1)
		(s0) edge node {$a$}  (s2)
		(s1) edge node [swap] {$\tau$} (s5)
		(s5) edge node [swap] {$b$}  (s3)
		(s2) edge node {$\tau$} (s4)
		(s4) edge node {$c$} (s6);
\end{tikzpicture}
}
\hfill
\subfigure[\aut{B}]{
\begin{tikzpicture}[node distance=1.5cm]
	\tikzstyle{every state}=[draw=black,line width=1pt,fill=white,minimum size=5pt]
	
	\node[initial, state] (s0) {\footnotesize $s_0$};
	\node[state] (s1) [below left of=s0] {\footnotesize $s_1$};
	\node[state] (s2) [below right of=s0] {\footnotesize $s_2$};
	\node[state] (s3) [below of=s1] {\footnotesize $s_3$};
	\node[state] (s4) [below of=s2] {\footnotesize $s_4$};

	\path (s0) edge node [swap]  {$a?$}  (s1)
		(s0) edge node {$a?$}  (s2)
		(s1) edge [loop left] node {$a?$} (s1)
		(s1) edge node {$c!$}  (s3)
		(s2) edge [loop right] node {$a?$} (s2)
		(s2) edge node [swap] {$d!$} (s4)
		(s3) edge [loop left] node {$a?$} (s3)
		(s4) edge [loop right] node {$a?$} (s4);
\end{tikzpicture}
}
\hfill
\subfigure[\deter{\aut{B}}]{
\begin{tikzpicture}[node distance=2.5cm]
	\tikzstyle{myellipse} = [ellipse, draw=black,line width=1pt,fill=white,minimum size=5pt]
	
	\node[initial, myellipse] (s0) {\footnotesize $\set{s_0}$};
	\node[myellipse] (s1) [below of=s0] {\footnotesize $\set{ s_1, s_2}$};
	\node[myellipse] (s2) [left of=s1] {\footnotesize $\set{s_3}$};
	\node[myellipse] (s3) [right of=s1] {\footnotesize $\set{s_4}$};

	\path 	(s0) edge node [swap]  {$a?$}  (s1)
				(s1) edge [loop below] node {$a?$} (s1)
				(s1) edge node [swap] {$c!$}  (s2)
				(s1) edge node {$d!$}  (s3)
				(s2) edge [loop left] node {$a?$} (s2)
				(s3) edge [loop right] node {$a?$} (s3);
\end{tikzpicture}
}
\caption{Visual representation of the LTS \aut{A} and the IOTSs \aut{B} and \deter{\aut{B}}. We represent states by circles, and transitions by arrows; each arrow in turn is labelled with the associated action for that particular transition. The initial state is marked by an arrow without a source state. Frow now on, we typically will not label individual states.}
\label{fig:lts_and_iots_example}
\end{figure}
\end{example}

Often, in particular in the context of testing, it is desirable to be able to distinguish between actions that are initiated by the environment (inputs), and actions that are initiated by the system itself (outputs). To this end, we introduce Input-Output Transition Systems, which are an extension of regular LTSs.

\begin{definition}[Input-Output Transition Systems]
An \emph{Input-Output Transition System} (IOTS) is a quintuple $\aut{A} = \tuple{S, S^0, \Lin, \Lout, \etrans}$, where $\Lin$ is a set of input labels and $\Lout$ a set of output labels such that $\Lin \intersection \Lout = \emptyset$. We define $L = \Lin \union \Lout$ and $\Ltau = L \cup \set{ \tau }$, where $\tau \notin L$. $S$, $S^0$ and $\etrans{}$ are as defined for LTSs. Additionally, IOTSs must be \emph{input-enabled}, i.e., $s \trans{a}$ for all $s \in S, a \in \Lin$.\end{definition}

\begin{remark}
Throughout this article we sometimes suffix a question mark ($?$) to the input labels and an exclamation mark ($!$) to the output labels, to help differentiating the two types. These are, however, not part of the label.
\end{remark}

Note that IOTSs are similar to I/O automata~\cite{LT87,DeNicola1995}, except that the latter allow multiple internal actions, rather than $\tau$ only. All our results can easily be phrased in the I/O automata framework.

By requiring IOTSs to be input-enabled, any input initiated by the environment is never refused by the system. For deterministic systems (see Definition~\ref{def:determinism}), this restriction can easily be lifted by adding a sink state which has self-loops for all possible actions, and adding transitions for the missing inputs to that sink state (so-called \emph{demonic completion} \cite{DeNicola1995,Bijl2004}). For nondeterministic systems, a solution is provided in~\cite{BS08}.

\begin{example}
Figure~\ref{fig:lts_and_iots_example}(b) shows an IOTS \aut{B}. Note that since $\Lin = \set{a}$ and $s \trans{a}$ for every $s \in S$, \aut{B} is input-enabled. \qed
\end{example}

We introduce the standard language-theoretic concepts for IOTSs.

\begin{definition}[Notations]
\label{def:notations}
Let $\aut{A} = \tuple{S, S^0, \Lin, \Lout, \etrans}$ be an IOTS, then:

\begin{itemize}
\item A \emph{path} in \aut{A} is a (possibly infinite) sequence $\pi = s_0 a_1 s_1 \dotso s_n$ such that for all $1 \leq i \leq n$ we have $s_{i-1} \trans{a_i} s_i$ with $a_i \in \Ltau$. The set of all paths in $\aut{A}$ is denoted $\paths{\aut{A}}$.

\item The path operators $\mathit{first}$ and $\mathit{last}$ yield the first and last state of a finite path, respectively, e.g., for $\pi = s_0 a_1 s_1 a_2 s_2$ we have $\first{\pi} = s_0$ and $\last{\pi} = s_2$. A path $\pi$ is called \emph{initial} if $\first{\pi} \in S^0$.

\item The path operator $\mathit{trace}$ yields the sequence of actions that is obtained by erasing all states and $\tau$-actions from a given path, e.g., for $\pi = s_0 a_1 s_1 \tau s_2 a_2 s_3$ we have $\trace{\pi} = a_1 a_2$; we call such a sequence of actions a $\mathit{trace}$ of $\aut{A}$. The \emph{length} of a trace $\sigma = a_1 a_2 \dotso a_n$, denoted $\abs{\sigma}$, is the length of the corresponding sequence, i.e., $\abs{\sigma} = \abs{a_1 a_2 \dotso a_n} = n$.

\item Given an action $a$ and a set of actions $P$, we denote by $a \project P$ the \emph{projection} of $a$ on $P$, i.e., $a \project P = a$ if $a \in P$, and $a \project P = \epsilon$ otherwise. The projection of a trace $\sigma = a\sigma'$ on a set of actions $P$ follows naturally from this: $\sigma \project P = a\sigma' \project P = a \project P \; + \; \sigma' \project P$. Finally, the projection of a set of traces $T$ on a set of actions $P$ is defined as $T \project P = \set{ \sigma \project P \where \sigma \in T}$.

\item If there is a finite path $\pi$ in $\aut{A}$ such that $\first{\pi} = s$, $\last{\pi} = s'$ and $\trace{\pi} = \sigma$, we write $s \Trans{\sigma} s'$; if there exists an $s' \in S$ such that $s \Trans{\sigma} s' $, we write $s \Trans{\sigma}$, and $s \nTrans{\sigma}$ if this is not the case. 

\item For a finite trace $\sigma$ and state $s \in S$, we denote by $\reach{s, \sigma}$ the set of states in \aut{A} (possibly empty) that can be reached from $s$ via $\sigma$, i.e., $\reach{s, \sigma} = \set{ s' \in S \where s \Trans{\sigma} s' }$. Similarly, for a given finite trace $\sigma$ and a set of states $S' \subseteq S$, we denote by $\reach{S', \sigma}$ the set of states in \aut{A} that can be reached from any of the states in $S'$ via $\sigma$, i.e., $\reach{S', \sigma} = \set{ s \in S \where \exists s' \in S' \suchthat s' \Trans{\sigma} s }$.

\item For a finite trace $\sigma$ and state $s \in S$, $\outs{s, \sigma}$ is the set of output actions that are enabled in any of the states reachable from $s$ by $\sigma$, i.e., $\outs{s, \sigma} = \set{ a \in \Lout \where \exists s' \in \reach{s, \sigma} \suchthat s' \Trans{a}}$. We use the shorthand $\outs{s}$ for the case $\outs{s, \epsilon}$, i.e., the set of output actions that are enabled in $s$ itself.

\item For every $s \in S$ we denote by $\traces{s}$ the set of all traces of $\aut{A}$ that correspond to paths that start in $s$, i.e., $\traces{s} = \set{ \trace{\pi} \where \pi \in \paths{\aut{A}} \land \first{\pi} = s }$. We denote by $\traces{\aut{A}} = \bigcup_{s \in S^0} \; \traces{s}$ the set of all traces that correspond to initial paths in $\aut{A}$. Two IOTSs \aut{B} and \aut{C} are \emph{trace equivalent}
if $\traces{\aut{B}} = \traces{\aut{C}}$.
\end{itemize}
\end{definition}

A fundamental concept in automata theory is determinism.

\begin{definition}[Determinism]
\label{def:determinism}
An IOTS $\aut{A} = \tuple{S, S^0, \Lin, \Lout, \; \trans{}}$ is \emph{deterministic} if for all $s,s',s''\in S,  a \in L$ we have that $s \trans{a} s'$ and  $s \trans{a} s''$ imply $a \neq \tau$ and $s' = s''$. Otherwise, \aut{A} is \emph{nondeterministic}.
\end{definition}

\begin{figure}[t]
\centering
\subfigure[\aut{A}]{
\begin{tikzpicture}
	\node[initial, state] (s0) {};
	\node[state] (s1) [below left of=s0] {};
	\node[state] (s2) [below right of=s0] {};

	\path (s0) edge node [swap]  {$a!$}  (s1)
		  	 (s0) edge node {$b!$}  (s2);
\end{tikzpicture}
}
\subfigure[\aut{B}]{
\begin{tikzpicture}
	\node[initial, state] (s0) {};
	\node[state] (s1) [below left of=s0] {};
	\node[state] (s2) [below right of=s0] {};

	\path (s0) edge node [swap]  {$a!$}  (s1)
		  	 (s0) edge node {$\tau$}  (s2);
\end{tikzpicture}
}
\subfigure[\aut{C}]{
\begin{tikzpicture}
	\node[initial, state] (s0) {};
	\node[state] (s1) [below left of=s0] {};
	\node[state] (s2) [below right of=s0] {};

	\path 	(s0) edge node [swap]  {$a!$}  (s1)
		  		(s0) edge node {$a!$}  (s2);
\end{tikzpicture}
}
\subfigure[\aut{D}]{
\begin{tikzpicture}
	\node[initial, state] (s0) {};
	\node[state] (s1) [below right of=s0] {};
	\node[state] (s2) [below left of=s1] {};
	\node[state] (s3) [below left of=s0] {};
	\node[state] (s4) [below right of=s1] {};
	\node[state] (s5) [below left of=s3] {};
	
	\path		(s0) edge node {$\tau$} (s1)
				(s1) edge node {$\tau$} (s2)
				(s1) edge node {$a!$} (s4)
				(s2) edge node {$\tau$} (s3)
				(s3) edge node {$\tau$} (s0)
				(s3) edge node [swap] {$b!$} (s5);
\end{tikzpicture}
}
\caption{One deterministic (\aut{A}) and three nondeterministic (\aut{B}, \aut{C}, \aut{D}) IOTSs. The IOTS \aut{D} is divergent.}
\label{fig:iots_nondeterministic_and_divergent}
\end{figure}
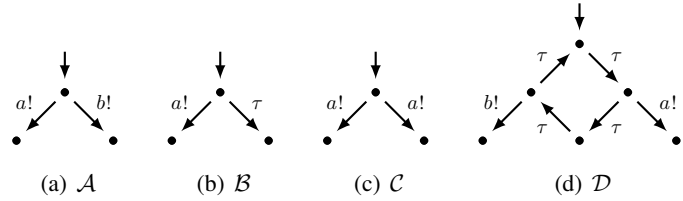

\begin{example}
Figure~\ref{fig:iots_nondeterministic_and_divergent} shows some deterministic and nondeterministic IOTSs. \qed
\end{example}

Lastly, we introduce the notions of convergence and divergence.

\begin{definition}[Divergence]
Given an IOTS $\aut{A} = \tuple{S, S^0, \Lin, \Lout, \; \trans{}}$, a state $s \in S$ of \aut{A} is \emph{divergent} if there is an infinite path $s_0 a_1 s_1 a_2 s_2 \dotso$ with $s_0 = s$ and $s_i \in S$, that contains only $\tau$ transitions, i.e., $a_i=\tau$ for all $i$. An IOTS is called divergent if it contains at least one such state, otherwise it is \emph{convergent}.
\end{definition}

For the purposes of this paper, we require all IOTSs to be convergent.

\begin{example}
Figure~\ref{fig:iots_nondeterministic_and_divergent}(d) shows the divergent IOTS \aut{D}. Clearly, it is possible for \aut{D} to perform an infinite sequence of $\tau$-transitions by continuously looping through the innermost four states. \qed
\end{example}

\subsection{Operations on IOTSs}
In this section, we introduce several standard operations on IOTSs. First, every nondeterministic IOTS can be transformed into a deterministic IOTS \cite{sudkamp2006}; the latter is called the determinisation of the original IOTS and is trace equivalent to it \cite{katoen2008}. Using this operator, modelling effort is saved since no attention needs to be paid to making the specification deterministic.

\begin{definition}[Determinisation]
\label{def:determinisation}
The \emph{determinisation} of an IOTS $\aut{A} = \tuple{S, S^{0}, \Lin, \Lout, \etransA}$ is the IOTS $\deter{\aut{A}} = \tuple{T, \set{S^0}, \Lin, \Lout, \etransDet}$ such that $T = \wp{(S)} \setminus { \emptyset }$ and $\etransDet{} = \set{(U, a, V) \in T \times L \times T \where V = \linebreak \reachA{U, a} \land V \neq \emptyset }$.
\end{definition}

\begin{example}
Consider the nondeterministic IOTS \aut{B} shown in Figure~\ref{fig:lts_and_iots_example}(b). Its corresponding determinisation \deter{\aut{B}} is shown in Figure~\ref{fig:lts_and_iots_example}(c). \qed
\end{example}

Second, we define the parallel composition operator. This operator is fundamental in modelling frameworks for component-based design. It allows one to build complex system models from smaller ones, thus breaking up the specification of a system into manageable pieces. Parallel composed IOTSs synchronise on shared inputs and complementary input-output pairs~\cite{DeNicola1995}.

\begin{definition}[Parallel composition of IOTSs]
\label{def:parallel_iots}
Given are two IOTSs $\aut{A} = \tuple{\SA, S^{0}_{\aut{A}}, \LinA, \LoutA, \etransA}$ and $\aut{B} = \tuple{\SB, S^{0}_{\aut{B}}, \LinB, \LoutB, \etransB}$ such that $\LoutA \intersection \LoutB = \emptyset$. The \emph{parallel composition} of \aut{A} and \aut{B} is the IOTS $\AparB = \tuple{\SAparB, S^0_{\AparB}, \LinAparB, \LoutAparB, \etransAparB}$, where $\SAparB = S_{\aut{A}} \times S_{\aut{B}}$, $S^0_{\AparB} = S^{0}_{\aut{A}} \times S^{0}_{\aut{B}}$, $\LinAparB = \linebreak (\LinA \union \LinB) \setminus (\LoutA \union \LoutB)$, and $\LoutAparB = \LoutA \union \LoutB$. The transition relation $\etransAparB$ is defined as follows:
\begin{align*}
\etransAparB \; \; = \; \; & \set{ ((s, t), a?, (s', t')) \where s \transA{a?} s' \land t \transB{a?} t' } \\
\union \; \; & \set{ ((s, t), a!, (s', t')) \where s \transA{a?} s' \land t \transB{a!} t' } \\
\union \; \; & \set{ ((s, t), a!, (s', t')) \where s \transA{a!} s' \land t \transB{a?} t' } \\
\union \; \; & \set{ ((s, t), a, (s', t)) \where s \transA{a} s' \land t \in \SB \land a \in \Ltau_{\aut{A}} \setminus \LB } \\
\union \; \; & \set{ ((s, t), a, (s, t')) \where t \transB{a} t' \land s \in \SA \land a \in \Ltau_{\aut{B}} \setminus \LA }
\end{align*}
\end{definition}

Thus, $\LAparB =  \LinAparB \union \LoutAparB = \LA \union \LB$.

\begin{figure}[t]
\subfigure[\aut{A}] {
\begin{tikzpicture}
	\node[initial, state] (s0) {};
	\node[state] (s1) [below of=s0] {};
	\node[state] (s2) [below of=s1] {};
	\node[state] (s3) [below of=s2] {};

	\path		(s0) edge [loop above right] node {$b?, c?$} (s0)
				(s0) edge node [swap] {$a?$} (s1)
				(s1) edge [loop above right] node {$a?, c?$} (s1)
				(s1) edge node [swap] {$b?$} (s2)
				(s2) edge [loop left] node {$a?, b?, c?$} (s2)
				(s2) edge node {$d!$} (s3)
				(s3) edge [loop left] node {$a?, b?, c?$} (s3);
\end{tikzpicture}
}
\hfill
\subfigure[\aut{B}] {
\begin{tikzpicture}
	\node[initial, state] (s0) {};
	\node[state] (s1) [below of=s0] {};
	\node[state] (s2) [right of=s1] {};
	\node[state] (s3) [below of=s1] {};
	\node[state] (s4) [below of=s2] {};
	\node[state] (s5) [below of=s3] {};
	\node[state] (s6) [below of=s4] {};

	\path		(s0) edge [loop above right] node {$b?$} (s0)
				(s0) edge [swap] node {$a!$} (s1)
				(s0) edge node {$d?$} (s2)
				(s1) edge [loop left] node {$d?$} (s1)
				(s1) edge [swap] node {$b?$} (s3)
				(s1) edge node {$c!$} (s4)
				(s2) edge [loop right] node {$b?,d?$} (s2)
				(s3) edge [loop left] node {$b?,d?$} (s3)
				(s3) edge node {$e!$} (s5)
				(s5) edge [loop left] node {$b?,d?$} (s5)
				(s4) edge [loop right] node {$b?,d?$} (s4)
				(s4) edge node [swap] {$a!$} (s6)
				(s6) edge [loop right] node {$b?,d?$} (s6);
\end{tikzpicture}
}
\hfill
\subfigure[$\aut{A} \parallel \aut{B}$] {
\begin{tikzpicture}
	\node[initial, state] (s0) {};
	\node[state] (s1) [right of=s0] {};
	\node[state] (s2) [below of=s1] {};
	\node[state] (s3) [right of=s1] {};
	\node[state] (s4) [below of=s2] {};
	\node[state] (s5) [below of=s3] {};
	\node[state] (s6) [left of=s4] {};
	\node[state] (s7) [below of=s6] {};
	\node[state] (s8) [below of=s4] {};
	\node[state] (s9) [below of=s5] {};
	\node[state] (s10) [right of=s3] {};
	\node[state] (s11) [below of=s10] {};
	\node[state] (s12) [below of=s11] {};
	\node[state] (s13) [right of=s11] {};
	\node[state] (s14) [below of=s13] {};
	\node[state] (s15) [below of=s9] {};

	\path		(s0) edge node {$a!$} (s1)
				(s1) edge [swap] node {$b?$} (s2)
				(s1) edge node {$c!$} (s3)
				(s2) edge node {$e!$} (s4)
				(s2) edge node [swap] {$d!$} (s6)
				(s3) edge node [swap] {$a!$} (s5)
				(s3) edge node {$b?$} (s10)
				(s4) edge node {$d!$} (s8)
				(s5) edge node [swap] {$b!$} (s9)
				(s6) edge node [swap] {$e!$} (s7)
				(s9) edge node [swap] {$d!$} (s15)
				(s10) edge node {$d!$} (s13)
				(s10) edge node [swap] {$a!$} (s11)
				(s11) edge node [swap] {$d!$} (s12)
				(s13) edge node {$a!$} (s14);
\end{tikzpicture}
}
\caption{The IOTSs \aut{A} and \aut{B}, and their parallel composition $\AparB$. Note that we have left out some of the $b?$-labelled self-loops from the visualisation of $\AparB$ to reduce clutter.}
\label{fig:iots_parallel_composition_example}
\end{figure}

\begin{example}
Figure~\ref{fig:iots_parallel_composition_example} shows two IOTSs \aut{A} and \aut{B}, and their parallel composition $\AparB$. We have \linebreak$\LinA = \set{a, b, c}$, $\LoutA = \set{d}$, $\LinB = \set{b, d}$, and $\LoutB = \set{a, c, e}$. Note that indeed $\LoutA \intersection \LoutB = \emptyset$, as required; therefore, by Definition \ref{def:parallel_iots}, $\LinAparB = \set{b}$ and $\LoutAparB = \set{a, c, d, e}$. \qed
\end{example}

Finally, it is often useful to hide certain actions of a given IOTS, thereby essentially renaming the corresponding labels to $\tau$. For example, when parallel composing two IOTSs, some actions are only used for synchronisation; after composition, they are not needed anymore.

\begin{definition}[Action hiding in IOTSs]
\label{def:iots_action_hiding}
Let $\aut{A} = \tuple{S, S^0, \Lin, \Lout, \etransA}$ be an IOTS and $H \subseteq \Lout$ a set of output labels, then one can \emph{hide} $H$ in \aut{A} to get the IOTS $\hide{\aut{A}, H} = \tuple{S, S^{0}, \Lin, \Lout \setminus H, \etransHide}$, where $\etransHide {} = \set{ (s, a, s') \in \etransA \where a \notin H} \union \set{ (s, \tau, s') \in S \times \set{\tau} \times S \where \exists a \in H \suchthat (s, a, s') \in \etransA}$.
\end{definition}

Thus, we only allow output actions to be hidden. Furthermore, we do not allow action hiding to lead to divergent IOTSs, i.e., the hiding of outputs may not lead to the creation of $\tau$-loops.

\begin{figure}[t]
\centering
\subfigure[\aut{A}]{
\begin{tikzpicture}
	\node[initial, state] (s0) {};
	\node[state] (s2) [below of=s0] {};
	\node[state] (s1) [left of=s2] {};
	\node[state] (s3) [right of=s2] {};
	\node[state] (s4) [below of=s2] {};

	\path	(s0) edge node [swap] {$a!$} (s1)
			(s0) edge node [swap] {$\tau$} (s2)
			(s0) edge node {$c!$} (s3)
			(s2) edge node [swap] {$b!$}  (s4);
\end{tikzpicture}
}
\subfigure[$\hide{\aut{A}, \set{ a, b }}$]{
\begin{tikzpicture}
	\node[initial, state] (s0) {};
	\node[state] (s2) [below of=s0] {};
	\node[state] (s1) [left of=s2] {};
	\node[state] (s3) [right of=s2] {};
	\node[state] (s4) [below of=s2] {};

	\path	(s0) edge node [swap] {$\tau$} (s1)
			(s0) edge node [swap] {$\tau$} (s2)
			(s0) edge node {$c!$} (s3)
			(s2) edge node [swap] {$\tau$}  (s4);
\end{tikzpicture}
}
\caption{The IOTSs \aut{A} and $\hide{\aut{A}, \set{ a, b }}$.}
\label{fig:hiding_example}
\end{figure}

\begin{example}
Figure~\ref{fig:hiding_example} shows the IOTSs \aut{A} with $\Lout_{\aut{A}} = \set{a, b, c}$ and $\aut{B} = \hide{\aut{A}, \set{ a, b }}$. \qed
\end{example}

From now on, we typically won't show all input-labelled self-loops in visualisations of IOTSs, to reduce clutter. Thus, we assume that every IOTS is input-enabled (unless mentioned otherwise).

\subsection{Properties of IOTSs}\label{sec:iots_properties}
IOTSs possess several interesting properties, that will also be of use when working with QTSs later on.
We provide three results, showing that (1)~hiding of actions corresponds to projection of traces, (2)~parallel composition does not introduce new traces when projecting on the alphabet of either one of the components, and (3)~parallel composition of components that synchronise on all actions yields the intersection of the traces of the components.
\newcommand{\propositionIOTSHidingAndProjection}{%
Given an IOTS \aut{A} and a set of labels $H \subseteq \LoutA$, we have $\traces{\hide{\aut{A}, H}} = \linebreak \traces{\aut{A}} \project (\LA \setminus H)$.
}

\begin{proposition}
\label{prop:hiding_and_projection}
\propositionIOTSHidingAndProjection
\end{proposition}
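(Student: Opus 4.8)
The plan is to prove the equality of the two trace sets by double inclusion, in each direction establishing a correspondence between paths in $\aut{A}$ and paths in $\hide{\aut{A}, H}$. The crucial structural observation is that the two IOTSs share the same set of states and the same state space, and that the transition relation $\etransHide$ is obtained from $\etransA$ by relabelling exactly the $H$-transitions to $\tau$ and leaving all other transitions untouched. Consequently there is a natural bijection between paths of $\aut{A}$ and paths of $\hide{\aut{A}, H}$: a path $\pi = s_0 a_1 s_1 \dotso s_n$ in $\aut{A}$ corresponds to the path $\pi'$ with the same states and with each label $a_i \in H$ replaced by $\tau$, and this correspondence is clearly a bijection on $\paths{\aut{A}} = \paths{\hide{\aut{A}, H}}$ once we note that the underlying state sequences coincide.

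First I would fix an initial path $\pi$ in $\aut{A}$ and compute how its trace transforms. By definition of $\mathit{trace}$, the trace of $\pi'$ is obtained from the label sequence $a_1 \dotso a_n$ by erasing states, the original $\tau$-actions, and additionally all occurrences of labels in $H$ (which have become $\tau$ in $\pi'$). I claim this is exactly $\trace{\pi} \project (\LA \setminus H)$: the operator $\trace{}$ already erases $\tau$'s, and projecting onto $\LA \setminus H$ erases precisely the remaining symbols that lie in $H$, since every non-$\tau$ label of $\pi$ lies in $\LA$. The two operations therefore commute to give $\trace{\pi'} = \trace{\pi} \project (\LA \setminus H)$. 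This identity, applied over all initial paths, yields both inclusions simultaneously: every trace of $\hide{\aut{A}, H}$ is the projection of a trace of $\aut{A}$, and conversely every projected trace of $\aut{A}$ arises as a trace of $\hide{\aut{A}, H}$.

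The cleanest way to present this is to prove the pointwise claim $\trace{\pi'} = \trace{\pi} \project (\LA \setminus H)$ by induction on the length of $\pi$, using the recursive definition of projection on traces given in Definition~\ref{def:notations}; the base case $\pi = s_0$ is immediate, and the inductive step splits into the three cases $a_{n} = \tau$, $a_{n} \in H$, and $a_{n} \in \LA \setminus H$, each of which is a direct unfolding of the definitions of $\mathit{trace}$, of $\etransHide$, and of $\project$. Finally I would take the union over all $s \in S^0$ to pass from the statement about individual paths to the statement $\traces{\hide{\aut{A}, H}} = \traces{\aut{A}} \project (\LA \setminus H)$.

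I expect the only real subtlety — rather than a genuine obstacle — to be the bookkeeping around the interaction of $\mathit{trace}$ and $\project$: one must be careful that $H \subseteq \LoutA \subseteq \LA$, so that no label in $H$ survives the projection onto $\LA \setminus H$, and that the $\tau$-erasure built into $\mathit{trace}$ does not double-count the relabelled $H$-actions. Once this is phrased correctly the rest is routine, and the shared-state-space bijection makes both inclusions fall out of the single pointwise identity.
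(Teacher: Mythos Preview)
The paper itself contains no proof of this proposition; all proofs are deferred to the technical report~\cite{techrep}. Your argument is the standard one and is correct in substance: the two systems share states and initial states, $\etransHide$ is obtained from $\etransA$ by relabelling exactly the $H$-transitions to $\tau$, and hence every initial path $\pi$ of $\aut{A}$ yields an initial path $\pi'$ of $\hide{\aut{A},H}$ with $\trace{\pi'} = \trace{\pi}\project(\LA\setminus H)$, and conversely every initial path of $\hide{\aut{A},H}$ lifts to one of $\aut{A}$.

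One small imprecision worth tightening: the path correspondence is not literally a bijection. If $\aut{A}$ has two distinct transitions $s\transA{a}s'$ and $s\transA{b}s'$ with $a,b\in H$, these give rise to two distinct paths of $\aut{A}$ but a single path of $\hide{\aut{A},H}$; similarly, an original $\tau$-transition and a hidden $H$-transition between the same states collapse. What you actually need, and what your argument already delivers, is a \emph{surjection} from $\paths{\aut{A}}$ onto $\paths{\hide{\aut{A},H}}$ together with the pointwise trace identity; that suffices for both inclusions on trace sets. Rephrasing the ``bijection'' claim as ``every path of either system arises from one of the other via the relabelling'' removes the inaccuracy without changing the proof.
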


\newcommand{\propositionIOTSParallelTraceInclusion}{%
Given two IOTSs \aut{A} and \aut{B}, we have $\traces{\AparB} \project \LA \subseteq \traces{\aut{A}}$ and \linebreak $\traces{\AparB} \project \LB \subseteq \traces{\aut{B}}$.
}

\begin{proposition}
\label{prop:iots_parallel_trace_inclusion}
\propositionIOTSParallelTraceInclusion
\end{proposition}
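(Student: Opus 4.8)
The argument rests on a transition-level observation describing how the $\aut{A}$-component evolves in each clause of Definition~\ref{def:parallel_iots}. First I would establish the following dichotomy: for every transition $(s,t) \transAparB{a} (s',t')$, either $s \transA{a} s'$ holds (the first four clauses, where the $\aut{A}$-component participates), in which case $a \in \LA$ precisely when $a \neq \tau$; or else $s = s'$ and $a \in \LtauB \setminus \LA$ (the fifth clause, where only $\aut{B}$ moves), so that $a \notin \LA$. This is verified by inspecting the five clauses, using $\LinA, \LoutA \subseteq \LA$ and $\tau \notin \LA$; note in particular that in the output-synchronisation clause a label read by $\aut{A}$ as an input $a?$ appears in $\AparB$ as an output $a!$, but the underlying label $a$ still lies in $\LinA \subseteq \LA$.

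Using this dichotomy I would prove, by induction on path length, the following projection lemma: for every path $\pi = (s_0,t_0)\, a_1\, (s_1,t_1) \dotso (s_n,t_n)$ in $\AparB$ there is a path $\pi_{\aut{A}}$ in $\aut{A}$ with $\first{\pi_{\aut{A}}} = s_0$, $\last{\pi_{\aut{A}}} = s_n$, and $\trace{\pi_{\aut{A}}} = \trace{\pi} \project \LA$. The base case $n = 0$ is the single-state path $s_0$. For the inductive step I apply the dichotomy to the final transition $(s_{n-1}, t_{n-1}) \transAparB{a_n} (s_n, t_n)$: in the first case I extend $\pi_{\aut{A}}$ by $s_{n-1} \transA{a_n} s_n$, which contributes $a_n$ to $\trace{\pi_{\aut{A}}}$ exactly when $a_n \neq \tau$, matching the contribution of $a_n$ to $\trace{\pi} \project \LA$ (which occurs exactly when $a_n \in \LA$); in the second case $s_n = s_{n-1}$, so I leave $\pi_{\aut{A}}$ unchanged, and since $a_n \notin \LA$ the projected trace is likewise unchanged. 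Either way the invariant $\trace{\pi_{\aut{A}}} = \trace{\pi} \project \LA$ is preserved.

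To finish, take any $\sigma \in \traces{\AparB} \project \LA$, so $\sigma = \trace{\pi} \project \LA$ for some initial path $\pi$ in $\AparB$. As $\SstartAparB = \SstartA \times \SstartB$, the $\aut{A}$-component of $\first{\pi}$ lies in $\SstartA$, so the path $\pi_{\aut{A}}$ given by the projection lemma is initial and witnesses $\sigma = \trace{\pi_{\aut{A}}} \in \traces{\aut{A}}$. Hence $\traces{\AparB} \project \LA \subseteq \traces{\aut{A}}$, and the second inclusion follows by the same argument with the roles of $\aut{A}$ and $\aut{B}$ interchanged, the transition relation $\etransAparB$ being symmetric under this exchange (the first clause is symmetric, clauses two and three swap, and clauses four and five swap).

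I expect the only real obstacle to be the case bookkeeping in the dichotomy, and especially checking that the $?$/$!$ annotations --- which by the earlier remark are not part of the label --- never move a label into or out of $\LA$. Once this is settled the induction is entirely routine.
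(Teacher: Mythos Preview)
Your argument is correct: the transition-level dichotomy you extract from Definition~\ref{def:parallel_iots} is accurate (including the observation that the $?$/$!$ decorations are not part of the label, so the output-synchronisation clauses still yield $a \in \LA$), the induction on path length goes through exactly as you describe, and the symmetry remark dispatches the second inclusion.

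As for comparison with the paper: the paper does not actually contain a proof of this proposition. It states the result and then defers all proofs to the accompanying technical report~\cite{techrep}. Your path-projection argument is the standard way to establish this kind of result and is almost certainly what the technical report does as well; there is no meaningfully different route available here.
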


\newcommand{\propositionIOTSParallelTraceEquivalence}{%
Given two IOTSs \aut{A}, \aut{B} with $\LA = \LB$, we have $\traces{\AparB} = \traces{\aut{A}} \intersection \traces{\aut{B}}$.
}

\begin{proposition}
\label{prop:iots_parallel_trace_equivalence}
\propositionIOTSParallelTraceEquivalence
\end{proposition}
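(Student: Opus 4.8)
The plan is to establish the two inclusions separately. The inclusion $\traces{\AparB} \subseteq \traces{\aut{A}} \intersection \traces{\aut{B}}$ follows almost immediately from Proposition~\ref{prop:iots_parallel_trace_inclusion}. The key observation is that, because $\LA = \LB$, the alphabet of the composition is $\LAparB = \LA \union \LB = \LA$; hence every trace $\sigma \in \traces{\AparB}$ consists solely of actions in $\LA$, so that $\sigma \project \LA = \sigma$. Consequently $\traces{\AparB} = \traces{\AparB} \project \LA \subseteq \traces{\aut{A}}$, and by the symmetric argument $\traces{\AparB} \subseteq \traces{\aut{B}}$, which together yield the desired inclusion.

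For the reverse inclusion $\traces{\aut{A}} \intersection \traces{\aut{B}} \subseteq \traces{\AparB}$, which is the substantial part, I would prove the following strengthened statement by induction on $\abs{\sigma}$: for all $s \in \SA$, $t \in \SB$ and all traces $\sigma$, if $s \TransA{\sigma} s'$ and $t \TransB{\sigma} t'$, then $(s,t) \TransAparB{\sigma} (s', t')$. Applying this to initial states and recalling that $\SstartAparB = \SstartA \times \SstartB$ then gives the inclusion. In the base case $\sigma = \epsilon$, the two witnesses are purely internal paths; since $\LA = \LB$ forces $\LtauA \setminus \LB = \set{\tau}$ (and symmetrically), the $\tau$-interleaving clauses of Definition~\ref{def:parallel_iots} let me first replay all of \aut{A}'s $\tau$-steps (keeping \aut{B}'s state fixed) and then all of \aut{B}'s $\tau$-steps, reaching $(s', t')$ with an empty trace.

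For the inductive step $\sigma = a\sigma'$ with $a \in \LA$, I would split the \aut{A}-witness as $s \TransA{\epsilon} s_1 \transA{a} s_2 \TransA{\sigma'} s'$ and the \aut{B}-witness analogously. The heart of the argument is that $a$ is necessarily a shared action, and here I use both hypotheses: since $a \in \LA = \LB$ it occurs in both alphabets, and since $\LoutA \intersection \LoutB = \emptyset$ it cannot be an output of both. Thus exactly one of the synchronisation clauses applies --- $a$ is an input of both, or an output of one and an input of the other --- and in every case I obtain a single synchronising transition $(s_1, t_1) \transAparB{a} (s_2, t_2)$ whose trace contribution is $a$. Composing the empty-trace move $(s,t) \TransAparB{\epsilon} (s_1, t_1)$ (as in the base case), this synchronisation, and the induction hypothesis applied to $\sigma'$ from $(s_2, t_2)$ completes the step.

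I expect the main obstacle to be the bookkeeping in the $\supseteq$ direction: verifying that the case analysis on $a$ is genuinely exhaustive --- this is exactly where $\LA = \LB$ and $\LoutA \intersection \LoutB = \emptyset$ are indispensable, since they simultaneously rule out an unshared observable action (which would be interleaved rather than synchronised) and an output--output clash --- and carefully interleaving the independent $\tau$-segments of the two component paths so that the combined path in $\AparB$ carries precisely the trace $\sigma$.
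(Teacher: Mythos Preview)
The paper does not actually include a proof of this proposition; it defers all proofs to the companion technical report. Your argument is correct and is the natural one: the $\subseteq$ direction follows cleanly from Proposition~\ref{prop:iots_parallel_trace_inclusion} together with $\LAparB = \LA = \LB$ (so projection onto $\LA$ is the identity on $\traces{\AparB}$), and the $\supseteq$ direction is handled by your strengthened induction, with the synchronisation step justified exactly by the two hypotheses --- $\LA = \LB$ forces every visible action to be shared, and $\LoutA \intersection \LoutB = \emptyset$ rules out the output--output case, so one of the first three clauses of Definition~\ref{def:parallel_iots} always applies. The $\tau$-interleaving in the base case is also correct, since $\LA = \LB$ gives $\LtauA \setminus \LB = \set{\tau}$ and symmetrically for $\aut{B}$.
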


\section{Quiescent Transition Systems}\label{sec:qts}

\subsection{Basic notions and requirements}

IOTSs can be used to model the inputs and outputs of a system, but cannot explicitly express the observation of the absence of outputs, also called the observation of quiescence \cite{Vaandrager91, Tretmans96, Segala97}. To fill this void, we introduce Quiescent Transition Systems. 
These automata can be used to model all possible observations for a particular system, including quiescence, and can thus be thought of as `observation automata'. They are based on Tretmans' suspension automata~\cite{Tretmans96}, in the sense that a $\delta$-transition represents the observation of quiescence.
A basic variant of QTSs was already used in~\cite{Timmer2011} in a testing framework. However, restrictions for QTSs to prohibit counterintuitive behaviour, as well as characteristics and closure properties of such models, have never been studied before.

\begin{definition}[Quiescence]
Let $\aut{A} = \tuple{S, S^0, \Lin, \Lout, \etrans}$ be an IOTS. A state $s \in S$ is called \emph{quiescent} if $\nexists a \in \Lout \union \set{\tau} \suchthat s \trans{a}$, i.e., no outputs or internal transitions can be executed in state $s$. 
\end{definition}
A system in a quiescent state will be idle until a new input is supplied. Note that a state $s$ that can still perform a $\tau$-step is not considered quiescent, even if there is no output $a! \in \Lout$ such that $s \Trans{a!}$. After all, since quiescence signifies that a system is idle
indefinitely, it would not make sense if there are still internal steps possible. Moreover, from a more technical point of view, this ensures that QTSs are closed under hiding and that 
hiding and deltafication (see Section~\ref{sec:deltafication}) are commutative.

\begin{figure}[t]
\subfigure[\aut{A}]{
\begin{tikzpicture}
	\node[initial, state] (s0) {};
	\node[state] (s1) [below left of=s0] {};
	\node[state] (s2) [below right of=s0] {};

	\path (s0) edge node [swap]  {$a?$}  (s1)
		(s0) edge node {$b?$}  (s2)
		(s1) edge [loop left] node {$\delta$} (s1) 
		(s2) edge [loop right] node {$\delta$} (s2);
\end{tikzpicture}
}
\hfill
\subfigure[\aut{B}]{
\begin{tikzpicture}
	\node[initial, state] (s0) {};
	\node[state] (s1) [below of=s0] {};

	\path (s0) edge [loop right] node {$\delta$} (s0)
		(s0) edge node [swap] {$a!$} (s1)
		(s1) edge [loop right] node {$\delta$} (s1);
\end{tikzpicture}
}
\hfill
\subfigure[\aut{C}]{
\begin{tikzpicture}
	\node[initial, state] (s0) {};
	\node[state] (s1) [below right  of=s0] {};
	\node[state] (s2) [below left of=s0] {};
	\node[state] (s3) [below of=s1] {};
	\node[state] (s4) [below of=s2] {};
	\node[state] (s5) [below of=s3] {};
	
	\path (s0) edge node {$\delta$} (s1)
		 (s0) edge node [swap] {$a?$} (s2)
		 (s1) edge [loop right] node {$\delta$} (s1) 
		 (s1) edge node {$a?$} (s3)
		 (s2) edge node [swap] {$b!$} (s4)
		 (s3) edge node {$c!$} (s5)
		 (s4) edge [loop left] node {$\delta$} (s4)
		 (s5) edge [loop right] node {$\delta$} (s5);

\end{tikzpicture}
}
\hfill
\subfigure[\aut{D}]{
\begin{tikzpicture}
	\node[initial, state] (s0) {};
	\node[] (s1a) [right of=s0] {};
	\node[state] (s1)  [right of=s1a] {};
	\node[state] (s2)  [right of=s1] {};
	\node[state] (s3)  [below of=s0] {};
	\node[state] (s4)  [below of=s1] {};
	\node[state] (s6)  [below of=s3] {};
	\node[state] (s7)  [below of=s4] {};
	\node[state] (s5)  [left of=s6] {};
	\node[state] (s8)  [right of=s6] {};
	\node[state] (s9)  [below of=s2] {};
	\node[state] (s10)  [below of=s9] {};

	\path 	(s0) edge node {$\delta$} (s1)
				(s0) edge node {$a?$} (s3)
				(s1) edge node {$\delta$} (s2)
				(s1) edge node {$a?$} (s4)
				(s3) edge node [swap] {$b!$} (s5)
				(s3) edge node {$c!$} (s6)
				(s3) edge node {$d!$} (s8)
				(s4) edge node {$b!$} (s7)
				(s4) edge node [swap] {$d!$} (s8)
				(s5) edge [loop below] node {$\delta$} (s5)
				(s6) edge [loop below] node {$\delta$} (s6)
				(s8) edge [loop below] node {$\delta$} (s8)
				(s2) edge node {$a?$} (s9)
				(s9) edge node {$b!$} (s10)
				(s10) edge [loop below] node {$\delta$} (s10)
				(s7) edge [loop below] node {$\delta$} (s7)
				(s2) edge [loop right] node {$\delta$} (s2);
\end{tikzpicture}
}
\caption{The QTSs \aut{A}, \aut{B}, \aut{C} and \aut{D} that do not satisfy rule R1, rule R2, rule R3 and rule R4, respectively.}
\label{fig:qts_rule_violations}
\end{figure}
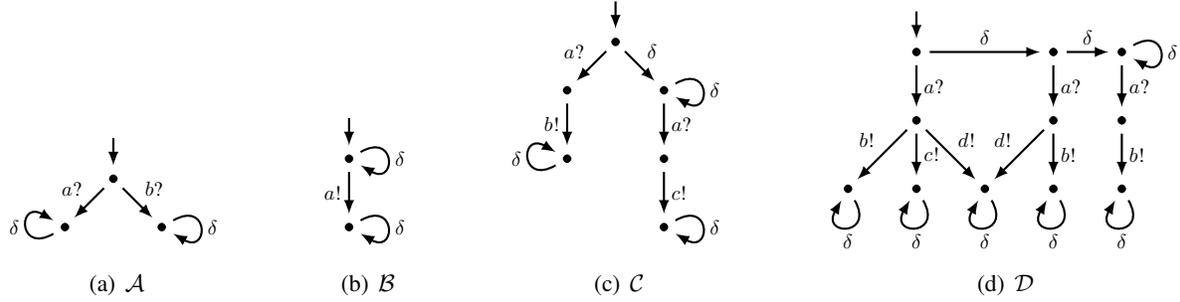

\begin{definition}[Quiescent Transition Systems]
\label{def:qts}
A \emph{Quiescent Transition System} (QTS) is an IOTS \linebreak $\aut{A} = \tuple{S, S^0, \Lin, \Lout \union \set{\delta},\etrans}$, where $\delta \notin \Lin \union \Lout$ is a special output label that is used to denote the observation of quiescence. We define $L = \Lin \union \Lout$, $\Ldeltatau = \Lin \union \Lout \union \set{\delta, \tau}$ and let $\etrans{} \subseteq S \times \Ldeltatau \times S$ be the transition relation. Like regular IOTSs, QTSs must be input-enabled, i.e., $s \trans{a}$ for all $s \in S, a \in \Lin$. Furthermore, we also require the following rules to hold for all states $s, s', s'' \in S$:

\begin{ruledef}{Quiescence should be observable}{if $s$ is quiescent, then $s \trans{\delta}$}
This rule requires that each quiescent state has an outgoing $\delta$-transition. Consider the QTS $\aut{A}$ in Figure~\ref{fig:qts_rule_violations}(a). This QTS does not satisfy this rule, as the topmost state cannot produce any outputs, but neither can execute an outgoing $\delta$-transition.
\end{ruledef}

\begin{ruledef}{No outputs after quiescence}{if $s \trans{\delta} s'$, then $s'$ is quiescent}
This rule ensures that the system is idle after a $\delta$-transition, i.e., it cannot provide an output (except for $\delta$ itself) or execute an internal transition, before another input is provided. In Figure~\ref{fig:qts_rule_violations}(b) the QTS $\aut{B}$ is shown which does not satisfy this rule. From the top-most state it is possible to first observe quiescence (the $\delta$-transition) and after that the $a!$ output, without an intermediate input. Since there is no particular observation duration associated with quiescence, but quiescence rather means that the system idles indefinitely, this is clearly counterintuitive and therefore disallowed.
\end{ruledef}

\begin{ruledef}{Quiescence does not enable new behaviour}{if $s \trans{\delta} s'$, then $\traces{s'} \subseteq \traces{s}$}
Given a state $s'$ of a QTS that is reached from another state $s$ by a $\delta$-transition (i.e., observation of quiescence), this rule demands that any trace that can be executed starting from state $s'$ can also be executed in state $s$, i.e., the observation of quiescence may not introduce any new possible observations. This rule was added to prevent situations like the one depicted in Figure~\ref{fig:qts_rule_violations}(c). For QTS $\aut{C}$ it is possible to observe the output $c!$ (after the input $a?$) after first observing quiescence, but if quiescence is not observed (because, for instance, the input $a?$ was directly given) the output $b!$ will be observed after the input $a?$ instead. Thus, the prior observation of quiescence allows new behaviour to be observed later on, which is counterintuitive. This rule therefore ensures that all behaviour that can be observed after observing quiescence can also be observed before.
\end{ruledef}

\begin{ruledef}{Continued quiescence preserves behaviour}{if $s \trans{\delta} s'$ and $s' \trans{\delta} s''$, then $\traces{s'} = \traces{s''}$}
A QTS $\aut{D}$ that violates this rule is shown in Figure~\ref{fig:qts_rule_violations}(d). From the initial state an observation of quiescence can be made, which then leads to a new state where the trace $ac$ can no longer be observed.  From the latter state another observation of quiescence can be made, which leads to another state where the trace $ad$ can no longer be observed. Rule R3 allows this, but as there is no particular time interval associated with the observation of quiescence, this does not make sense. We therefore have the additional requirement that any observations possible after two (or more) consecutive observations of quiescence should also be possible after a single observation of quiescence, and vice versa.
\end{ruledef}
\end{definition}

Just as for IOTSs, we require QTSs to be convergent. The reason for this is that divergent systems have  states that can execute internal transitions infinitely often and never output anything. Considering such a state quiescent would be nonintuitive, as it is not idle (and might even be able to provide an output action, even though it does not show it). Not considering it quiescent would also be nonintuitive, because of the possibility that no visible behaviour is observed.

Note that the converse of rule R1 is not required, e.g., we do not forbid that a state has both a $\delta$-transition and an output action enabled. This situation can arise during the determinisation of a QTS, as we will see in Section~\ref{sec:deltafication}. However, the $\delta$-transition should still end up in a quiescent state, as required by rule R2. Also note that a trace of a QTS can contain a sequence of $\delta$-actions. Although this might seem odd, it corresponds to
the practical testing scenario of observing a time-out rather than an output more than once in a row.



Since computing trace inclusion is expensive [1], an easier way to ensure that a QTS complies to rule R3 is to make sure the following alternative rule R3$'$ holds for all states $s, s', s'' \in S$.

\begin{description}
\item[Rule R3$'$:] if $s \trans{\delta} s'$ and $\exists a? \in L_{\text{I}}$ such that $s' \trans{a?} s''$ then also $s \trans{a?} s''$\textbf{.}
\end{description}

\noindent Clearly, any QTS that satisfies rule R3$'$ also satisfies rule R3.

Similarly, conformance to rule R4 for a QTS can be achieved by making sure that the following alternative rule R4$'$ holds for all states $s, s' \in S$ of the QTS.

\begin{description}
\item[Rule R4$'$:] if $s \trans{\delta} s'$ then $s' \trans{\delta} s'$, and if also $s' \trans{\delta} s''$ then $s'' = s'$\textbf{.}
\end{description}

\noindent Clearly, any QTS that satisfies rule R4$'$ also satisfies rule R4.

When comparing the structure of two QTSs \aut{A} and \aut{B}, the notion of isomorphisms can be useful.

\begin{definition}[Isomorphic QTSs]
Two QTSs $\aut{A} = \tuple{\SA, \SstartA, \LinA, \LoutA \union \set{\delta}, \etransA}$ and \linebreak $\aut{B} = \tuple{\SB, \SstartB, \LinB, \LoutB \union \set{\delta}, \etransB}$ are called \emph{isomorphic}, denoted $\aut{A} \isomorphic \aut{B}$, if there exists a bijection $h \colon \SA \rightarrow \SB$ (called an isomorphism) such that the following holds:

\begin{enumerate}
\item for all $s_0 \in \SstartA$ there exists a $t_0 \in \SstartB$ such that $h(s_0) = t_0$, and vice versa;
\item $s \transA{a} s'$ if and only if $h(s) \transB{a} h(s')$, for all $s, s' \in \SA$ and $a \in \LA \union \set{\delta, \tau}$.
\end{enumerate}
\end{definition}

\noindent Thus, two isomorphic QTSs are structurally equivalent.

\subsection{Operations on QTSs}
Since QTSs are a specialisation of IOTSs, all operations that are applicable to IOTSs (such as determinisation, parallel composition and hiding of actions) are also applicable to QTSs. Determinisation for QTSs is exactly the same as for IOTSs, but there are some minor differences for parallel composition and action hiding.

\begin{definition}[Parallel composition of QTSs]
\label{def:parallel_qts}
Let $\aut{A} = \tuple{\SA, S^{0}_{\aut{A}}, \LinA, \LoutA \union \set{\delta}, \etransA}$ and \linebreak $\aut{B} = \tuple{\SB, S^{0}_{\aut{B}}, \LinB, \LoutB \union \set{\delta}, \etransB}$ be two QTSs such that $\LoutA \intersection \LoutB = \emptyset$. The \emph{parallel composition} of \aut{A} and \aut{B} is then the QTS $\AparB= \tuple{\SAparB, S^0_{\AparB}, \LinAparB, \LoutAparB \union \set{\delta}, \etransAparB}$, where $\SAparB = \SA \times \SB$, $S^0_{\AparB} = S^{0}_{\aut{A}} \times S^{0}_{\aut{B}}$, $\LinAparB = (\LinA \union \LinB) \setminus (\LoutA \union \LoutB)$, and $\LoutAparB = \LoutA \union \LoutB$. $\etransAparB$ is defined as follows:
\begin{align*}
\etransAparB \; \; = \; \; & \set{ ((s, t), a?, (s', t')) \where s \transA{a?} s' \land t \transB{a?} t' } \\
\union \; \; & \set{ ((s, t), a!, (s', t')) \where s \transA{a?} s' \land t \transB{a!} t' } \\
\union \; \; & \set{ ((s, t), a!, (s', t')) \where s \transA{a!} s' \land t \transB{a?} t' } \\
\union \; \; & \set{ ((s, t), \delta, (s', t')) \where (s, \delta, s') \in \etransA \land (t, \delta, t') \in \etransB} \\
\union \; \; & \set{ ((s, t), a, (s', t)) \where s \transA{a} s' \land t \in \SB \land a \in \Ltau_{\aut{A}} \setminus \LB } \\
\union \; \; & \set{ ((s, t), a, (s, t')) \where t \transB{a} t' \land s \in \SA \land a \in \Ltau_{\aut{B}} \setminus \LA }
\end{align*}
\end{definition}

Thus, when compared to the parallel composition of regular IOTSs, we have the additional requirement that parallel composed QTSs must synchronise on the $\delta$-action, as the observation of quiescence can be made simultaneously for multiple QTSs. Again, we find that $\LAparB =  \LinAparB \union \LoutAparB = \LA \union \LB$.

\begin{figure}[t]
\hfill
\subfigure[$\aut{A}$]{
\begin{tikzpicture}
	\node[initial, state] (s0) {};
	\node[state] (s1) [below left of=s0] {};
	\node[state] (s2) [below right of=s0] {};
	\node[state] (s3) [below of=s1] {};
	\node[state] (s4) [below of=s2] {};

	\path (s0) edge [loop above right] node {$\delta$}  (s0) 
		(s0) edge node [swap] {$a?$} (s1)
		(s0) edge node {$b?$} (s2)
		(s1) edge node {$c!$} (s3)
		(s2) edge node [swap] {$d!$} (s4)
		(s3) edge [loop left] node {$\delta$} (s3) 
		(s4) edge [loop right] node {$\delta$} (s4);
\end{tikzpicture}
}
\hfill
\subfigure[$\aut{B}$]{
\begin{tikzpicture}
	\node[initial, state] (s0) {};
	\node[state] (s1) [below of=s0] {};
	\node[state] (s2) [below of=s1] {};

	\path	(s0) edge node [swap] {$a!$} (s1)
			(s1) edge node [swap] {$b!$} (s2)
			(s2) edge [loop right] node {$\delta$} (s2);
\end{tikzpicture}
}
\hfill
\subfigure[$\aut{A} \parallel \aut{B}$]{
\begin{tikzpicture}
	\node[initial, state] (s0) {};
	\node[state] (s1) [below of=s0] {};
	\node[state] (s2) [below of=s1] {};
	\node[state] (s3) [below of=s2] {};
	\node[state] (s4) [right of=s2] {};
	\node[state] (s5) [below of=s4] {};		

	\path	(s0) edge node [swap] {$a!$} (s1)
			(s1) edge node [swap] {$b!$} (s2)
			(s1) edge node {$c!$} (s4)
			(s2) edge node [swap] {$c!$} (s3)
			(s3) edge [loop left] node {$\delta$} (s3)
			(s4) edge node {$b!$} (s5)
			(s5) edge [loop right] node {$\delta$} (s5);
\end{tikzpicture}
}
\hfill {}
\caption{The QTSs $\aut{A}$, $\aut{B}$ and $\aut{A} \parallel \aut{B}$.}
\label{fig:qts_example}
\end{figure}

\begin{example}
See Figure~\ref{fig:qts_example}(a) for the visual representation of a QTS \aut{A} which satifies all the requirements for QTSs listed in Definition \ref{def:qts}. Figure~\ref{fig:qts_example}(b) shows another QTS \aut{B} and Figure~\ref{fig:qts_example}(c) shows the parallel composition of the QTSs \aut{A} and \aut{B}. \qed
\end{example}

\begin{definition}[Action hiding in QTSs]
\label{def:qts_action_hiding}
Let $\aut{A} = \tuple{S, S^0, \Lin, \Lout \union \set{\delta}, \etransA}$ be a QTS and $H \subseteq \Lout$ a set of labels, then one can \emph{hide} $H$ in \aut{A} to obtain the IOTS $\hide{\aut{A}, H} = \tuple{S, S^{0}, \Lin, (\Lout \setminus H) \union \set{\delta}, \etransHide}$, where $\etransHide {} = \set{ (s, a, s') \in \etransA \where a \notin H} \union \set{ (s, \tau, s') \in S \times \set{\tau} \times S \where \exists a \in H \suchthat (s, a, s') \in \etransA }$.
\end{definition}

We do not allow the special output label $\delta$ to be hidden, as this label doesn't represent a specific output but rather (the observation of) a lack of outputs. Furthermore, as for IOTSs, we do not allow action hiding to lead to divergent QTSs, i.e., hiding may not lead to the creation of $\tau$-loops.

\subsection{Properties of QTSs}
\label{sec:qts_properties}
In this section, we present several interesting properties of QTSs. First of all, it turns out 
that our model is closed under all operations defined thus far: determinisation, action hiding
and parallel composition. Therefore, these operations are indeed well-defined for QTSs.

\newcommand{\theoremQTSClosedUnderAllOperations}{%
QTSs are closed under determinisation, action hiding and parallel composition. 
Hence, given two QTSs \aut{A}, \aut{B} and a set of labels $H \subseteq \LoutA$, also 
\deter{\aut{A}}, \hide{\aut{A}, H} and $\AparB$ are QTSs.
}
\begin{theorem}
\label{thm:qts_closed_under_all_operations}
\theoremQTSClosedUnderAllOperations
\end{theorem}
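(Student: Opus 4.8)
The plan is to prove the theorem by treating the three operations separately, since the claim is a conjunction of three closure statements. For each operation I must verify that the result is (i)~a well-defined IOTS of the right signature (in particular input-enabled and convergent), and (ii)~satisfies all four rules R1--R4. The signature bookkeeping is routine in each case, so the substance lies in checking the four rules; I would organise the proof as three lemmas, one per operation, and then combine them.

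For \textbf{determinisation}, note that the states of \deter{\aut{A}} are nonempty sets $U \subseteq \SA$, and $U \transDet{a} V$ iff $V = \reachA{U,a} \neq \emptyset$. The key fact I would exploit is that determinisation preserves traces, so $\traces{U} = \bigcup_{s \in U} \traces{s}$; this immediately reduces the trace-based rules R3 and R4 for \deter{\aut{A}} to the corresponding rules for \aut{A}. For R1, I would argue that if a set-state $U$ is quiescent in \deter{\aut{A}}, then every $s \in U$ must be quiescent in \aut{A} (any output or $\tau$ enabled in some $s$ would survive determinisation), so each $s \trans{\delta}$ by R1, whence $U \transDet{\delta}$. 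For R2, if $U \transDet{\delta} V$ then $V = \reachA{U,\delta}$ consists of $\delta$-successors of states in $U$, each quiescent by R2 in \aut{A}; since a union of quiescent behaviours is still output-free, $V$ is quiescent in \deter{\aut{A}}. Convergence and input-enabledness transfer directly.

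For \textbf{action hiding}, $\hide{\aut{A},H}$ only renames some outputs in $H \subseteq \LoutA$ to $\tau$ and leaves $\delta$-transitions untouched. Here I would use Proposition~\ref{prop:hiding_and_projection}, which says $\traces{\hide{\aut{A},H}} = \traces{\aut{A}} \project (\LA \setminus H)$, to handle R3 and R4: the trace sets after hiding are projections of the originals, and projection is monotone and respects the equalities/inclusions demanded by R3 and R4. The crucial point is that the notion of \emph{quiescence} is invariant under this hiding: a state $s$ is quiescent in $\aut{A}$ iff it is quiescent in $\hide{\aut{A},H}$. One direction needs care --- hiding an output $a! \in H$ turns it into a $\tau$, which could in principle make a previously non-quiescent state look output-free --- but since quiescence forbids enabled $\tau$ as well as outputs, the renamed transition is still present as a $\tau$, so the state remains non-quiescent. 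This is exactly why the definition of quiescence excludes states with outgoing $\tau$, and I would flag it as the subtle step. Given this invariance, R1 and R2 transfer immediately (the $\delta$-transitions and their targets are unchanged). The no-new-divergence assumption built into Definition~\ref{def:qts_action_hiding} secures convergence.

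For \textbf{parallel composition}, the states are pairs $(s,t)$ and the components synchronise on shared labels and on $\delta$. For R1, I would show that $(s,t)$ is quiescent in \AparB{} iff both $s$ and $t$ are quiescent in their components: an output or $\tau$ is enabled in the product exactly when a corresponding independent or synchronising move is enabled, and input-enabledness guarantees the synchronisation partner for $\delta$ is always present. Then R1 for both components gives $s \trans{\delta}$ and $t \trans{\delta}$, which synchronise to $(s,t)\trans{\delta}$. For R2, a product $\delta$-move comes from synchronised component $\delta$-moves into quiescent $s',t'$, so $(s',t')$ is quiescent by the R1-characterisation. I expect R3 and R4 to be the \textbf{main obstacle}: they are stated in terms of trace \emph{sets} of the product, and there is no general equation expressing $\traces{(s,t)}$ in terms of $\traces{s}$ and $\traces{t}$ (Proposition~\ref{prop:iots_parallel_trace_inclusion} gives only inclusions via projection). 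I would therefore prove R3$'$ and R4$'$ instead, which are purely structural and transfer cleanly: a product $\delta$-transition decomposes into component $\delta$-transitions, each of which (by R3$'$, R4$'$ in the components) can be replayed or collapsed at the component level, and reassembling these moves in the product yields the required transition. Since R3$'$ implies R3 and R4$'$ implies R4, this suffices. The only remaining worry is that deriving R3$'$/R4$'$ for the product presupposes the components satisfy them rather than merely R3/R4; I would note that R3/R4 can always be re-established structurally for input-enabled convergent QTSs, or alternatively carry out the trace-set argument directly using the synchronisation structure, and I expect the structural route via the primed rules to be the cleanest.
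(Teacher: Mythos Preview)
Your overall plan and the arguments for action hiding and for R1, R2 under parallel composition are fine, but there are two gaps.

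The minor one is in R1 for determinisation. Your claim that a quiescent $U$ in $\deter{\aut{A}}$ forces every $s \in U$ to be quiescent in $\aut{A}$ is false: $s$ may have $s \trans{\tau} s'$ with $s'$ quiescent, so that no output is weakly reachable from $s$ and $U$ remains quiescent in $\deter{\aut{A}}$, yet $s$ itself is not quiescent (a $\tau$-transition does \emph{not} ``survive'' determinisation). The repair uses convergence: from any $s \in U$ a maximal $\tau$-path ends in some $s^*$ with no outgoing $\tau$ and (since $U$ is quiescent in $\deter{\aut{A}}$) no output in $\Lout$; hence $s^*$ is quiescent, $s^* \trans{\delta}$ by R1 in $\aut{A}$, and therefore $\reachA{U,\delta} \neq \emptyset$.

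The real problem is your treatment of R3 and R4 for parallel composition via the primed rules. Two things go wrong. First, R3 and R4 do \emph{not} imply R3$'$ and R4$'$ --- the primed versions demand the identical successor state, not merely trace inclusion --- so your remark that ``R3/R4 can always be re-established structurally'' is unfounded; you cannot assume the components satisfy R3$'$/R4$'$. Second, even if both components did satisfy R3$'$, the product generally does not: take $a? \in \LinA \setminus \LB$, so that $\delta$ synchronises both coordinates while $a?$ moves only the first. Then $(s,t) \transAparB{\delta} (s',t')$ and $(s',t') \transAparB{a?} (s'',t')$, but every $a?$-move out of $(s,t)$ lands in a pair whose second coordinate is $t$, not $t'$, so $(s,t) \transAparB{a?} (s'',t')$ fails and R3$'$ is violated in the product. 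The approach that works is precisely the trace-level one you relegate to a fallback: establish the zipping characterisation
\[
\sigma \in \tracesAparB{(s,t)} \quad\Longleftrightarrow\quad \sigma \project (\LA \cup \{\delta\}) \in \tracesA{s} \ \text{ and }\ \sigma \project (\LB \cup \{\delta\}) \in \tracesB{t},
\]
whose forward direction is Proposition~\ref{prop:qts_parallel_trace_inclusion} stated pointwise and whose backward direction is the standard reassembly argument for synchronising parallel composition (input-enabledness supplies the matching moves). With this in hand, R3 and R4 transfer immediately from the components by monotonicity of projection. Make this the primary argument rather than an afterthought.
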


%
%
%
%
%

We also provide two results concerning the traces of parallel compositions of QTSs, generalising the corresponding properties of IOTSs as given in Section~\ref{sec:iots_properties}. First, parallel composition does not introduce new traces when projecting on the alphabet of either one of the components. That is, when disregarding the actions of component \aut{B} in the traces of $\AparB$, the resulting set of traces is a subset of the traces of \aut{A}. It then quite easily follows that, when two parallel QTSs have the same alphabet (and hence synchronise on all actions), we obtain a subset of the intersection of their individual traces.

\newcommand{\propositionQTSParallelTraceInclusion}{%
Given two QTSs \aut{A} and \aut{B}, we have \mbox{$\traces{\AparB} \project (\LA \union \set{\delta}) \subseteq \traces{\aut{A}}$} and \linebreak $\traces{\AparB} \project (\LB \union \set{\delta}) \subseteq \traces{\aut{B}}$.
}

\begin{proposition}
\label{prop:qts_parallel_trace_inclusion}
\propositionQTSParallelTraceInclusion
\end{proposition}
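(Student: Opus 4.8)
The plan is to prove the first inclusion (the second is completely symmetric, with the roles of \aut{A} and \aut{B} exchanged) by lifting a witnessing path in $\AparB$ to a witnessing path in \aut{A}, exactly as in the proof of Proposition~\ref{prop:iots_parallel_trace_inclusion} but with one extra case for the synchronisation on $\delta$. Concretely, I would take any $\sigma \in \traces{\AparB}$ and fix an initial path $\pi = (s_0,t_0)\,c_1\,(s_1,t_1)\,c_2 \dotso (s_n,t_n)$ in $\AparB$ with $\trace{\pi} = \sigma$ and $(s_0,t_0) \in S^0_{\AparB}$, so that $s_0 \in \SstartA$. I then define the \emph{\aut{A}-projection} of $\pi$, a path $\pi_{\aut{A}}$ in \aut{A} starting in $s_0$, by walking through the transitions $c_i$ and, for each one, deciding which of the six clauses of Definition~\ref{def:parallel_qts} produced it.

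The core of the argument is a case analysis on those six clauses, showing both that $\pi_{\aut{A}}$ is a genuine path in \aut{A} and that $\trace{\pi_{\aut{A}}} = \sigma \project (\LA \union \set{\delta})$. For the first three clauses (shared input, and the two output/input synchronisations) the \aut{A}-component genuinely moves, $s_{i-1} \transA{a} s_i$, via a label $a$ that lies in $\LA$ (recall that the decorations $?$ and $!$ are not part of the label, so an output $a!$ of \aut{B} consumed as input $a?$ by \aut{A} is the same label $a \in \LinA \subseteq \LA$); these steps are kept in $\pi_{\aut{A}}$. For the new fourth clause a synchronised $\delta$ gives $s_{i-1} \transA{\delta} s_i$, which I keep as well, and $\delta$ lies in the projection set $\LA \union \set{\delta}$. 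For the fifth clause the \aut{A}-component moves autonomously on some $a \in \Ltau_{\aut{A}} \setminus \LB$, which I keep; and for the sixth clause the \aut{A}-component stands still ($s_{i-1} = s_i$) while \aut{B} moves on some $a \in \Ltau_{\aut{B}} \setminus \LA$, so I emit no \aut{A}-step. The key observation making the trace match is that, because $\LA$ and $\LB$ are defined without $\delta$, the two autonomous clauses never carry a $\delta$; hence the only labels surviving the projection onto $\LA \union \set{\delta}$ are exactly those produced by the first four clauses, i.e.\ exactly the labels along which $\pi_{\aut{A}}$ takes a visible step.

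I expect the bookkeeping around labels to be the only real obstacle, and it is minor: one must check that each composite label restricted to $\LA \union \set{\delta}$ coincides with the label of the retained \aut{A}-step (with $\tau$'s and the $\LB \setminus \LA$ labels of the sixth clause correctly erased), and that a sixth-clause transition contributes nothing on either side. Since this matching is label-local, a straightforward induction on the length of $\pi$ establishes $\trace{\pi_{\aut{A}}} = \sigma \project (\LA \union \set{\delta})$ while maintaining that $\pi_{\aut{A}}$ is an initial path of \aut{A}. Concluding, $\sigma \project (\LA \union \set{\delta}) = \trace{\pi_{\aut{A}}} \in \traces{\aut{A}}$, which is the desired inclusion; the second inclusion follows by the same argument with \aut{A} and \aut{B} interchanged.
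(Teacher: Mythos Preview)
Your proposal is correct. The paper itself defers all proofs to the accompanying technical report, so there is no in-text proof to compare against; however, your argument is exactly the expected lift of the IOTS result (Proposition~\ref{prop:iots_parallel_trace_inclusion}) to the QTS setting, with the single additional case for the $\delta$-synchronisation clause of Definition~\ref{def:parallel_qts}, and your observation that $\delta \notin \LA \cup \LB$ (so the two autonomous clauses never carry $\delta$) is precisely what makes the projection bookkeeping go through.
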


\newcommand{\propositionQTSParallelTraceEquivalence}{%
Given two QTSs \aut{A}, \aut{B} with $\LA = \LB$, we have $\traces{\aut{A} \parallel \aut{B}} = \traces{\aut{A}} \intersection \traces{\aut{B}}$.
}

\begin{proposition}
\label{prop:qts_parallel_trace_equivalence}
\propositionQTSParallelTraceEquivalence
\end{proposition}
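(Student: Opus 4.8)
The plan is to prove the two inclusions separately, with $\supseteq$ being the nontrivial direction. For $\traces{\AparB} \subseteq \traces{\aut{A}} \intersection \traces{\aut{B}}$ I would simply invoke Proposition~\ref{prop:qts_parallel_trace_inclusion}. Every trace of $\AparB$ is a sequence over $\LAparB \union \set{\delta} = (\LA \union \LB) \union \set{\delta}$, and since $\LA = \LB$ this equals $\LA \union \set{\delta} = \LB \union \set{\delta}$. Hence projecting any trace of $\AparB$ onto $\LA \union \set{\delta}$ (respectively $\LB \union \set{\delta}$) leaves it unchanged, so Proposition~\ref{prop:qts_parallel_trace_inclusion} gives $\traces{\AparB} = \traces{\AparB} \project (\LA \union \set{\delta}) \subseteq \traces{\aut{A}}$ and likewise $\traces{\AparB} \subseteq \traces{\aut{B}}$, whence $\traces{\AparB} \subseteq \traces{\aut{A}} \intersection \traces{\aut{B}}$.

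For the reverse inclusion the key lemma is: for all $\sigma \in (\LA \union \set{\delta})^*$ and all $s, s' \in \SA$, $t, t' \in \SB$, if $s \TransA{\sigma} s'$ and $t \TransB{\sigma} t'$, then $(s,t) \TransAparB{\sigma} (s',t')$. I would prove this by induction on $\abs{\sigma}$. In the base case $\sigma = \epsilon$, the witnessing paths consist of $\tau$-steps only; since $\LA = \LB$ we have $\LtauA \setminus \LB = \set{\tau}$ and $\LtauB \setminus \LA = \set{\tau}$, so the last two clauses of Definition~\ref{def:parallel_qts} let each component replay its internal steps independently, giving $(s,t) \TransAparB{\epsilon} (s',t)$ and then $(s',t) \TransAparB{\epsilon} (s',t')$. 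In the inductive step $\sigma = a\rho$, I decompose the \aut{A}-witness as $s \TransA{\epsilon} p$, $p \transA{a} q$, $q \TransA{\rho} s'$, and analogously the \aut{B}-witness through states $r, w$. In $\AparB$ I first replay the leading $\tau$-segments of both components to reach $(p,r)$, then fire the synchronised step $(p,r) \transAparB{a} (q,w)$, and finally apply the induction hypothesis to $q \TransA{\rho} s'$ and $w \TransB{\rho} t'$. Applying the lemma to initial states $s_0 \in \SstartA$ and $t_0 \in \SstartB$ witnessing $\sigma \in \traces{\aut{A}} \intersection \traces{\aut{B}}$, and recalling $\SstartAparB = \SstartA \times \SstartB$, yields $\sigma \in \traces{\AparB}$.

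The main obstacle is justifying the synchronised step $(p,r) \transAparB{a} (q,w)$, which requires a polarity case analysis on $a$. If $a = \delta$, both components offer a $\delta$-transition and synchronise through the dedicated $\delta$-clause of Definition~\ref{def:parallel_qts}. If $a \in \LinA$, then in \aut{B} the action $a$ is either an input (input/input synchronisation) or an output (input/output synchronisation); if $a \in \LoutA$, then $\LoutA \intersection \LoutB = \emptyset$ forces $a \in \LinB$, giving output/input synchronisation. In each case exactly one clause applies and the resulting transition carries the label $a$, so the trace is preserved. Here the hypothesis $\LA = \LB$ is genuinely used: it is what rules out unsynchronised visible moves (since $\LtauA \setminus \LB$ and $\LtauB \setminus \LA$ contain only $\tau$), which would otherwise let $\AparB$ produce a trace outside $\traces{\aut{A}} \intersection \traces{\aut{B}}$. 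Finally, convergence guarantees that the $\tau$-segments in the decomposition are finite, so the interleaving is well-defined and the constructed path is genuinely finite.
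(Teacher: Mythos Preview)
The paper itself does not include a proof of this proposition; it defers all proofs to the accompanying technical report, so there is no in-paper argument to compare against. Your proposal is correct and follows the natural route: Proposition~\ref{prop:qts_parallel_trace_inclusion}, with the projection onto $\LA \union \set{\delta}$ collapsing to the identity because $\LA = \LB$, handles the inclusion $\subseteq$; the inductive synchronisation lemma, together with the polarity case analysis (input/input, input/output, output/input, and the dedicated $\delta$-clause), handles $\supseteq$.

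One minor remark: your closing appeal to convergence is unnecessary. Finiteness of the $\tau$-segments in the decomposition is already guaranteed by the definition of $\Trans{}$ in Definition~\ref{def:notations}, which quantifies over finite paths only. Convergence is a standing assumption on QTSs but plays no role in this particular argument.
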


\section{From IOTS to QTS: deltafication}\label{sec:deltafication}

Usually, the specification and implementation of a system (under development) are given as IOTSs, rather than QTSs. During testing, however, we typically observe the outputs of the system generated in response to inputs from the environment; thus, it is useful to be able to refer to the absence of outputs (i.e., quiescence) explicitly. Hence, we need a way to convert an IOTS to a QTS that captures all possible observations of it, including quiescence; this conversion is called deltafication and is described in \cite{Tretmans96, Tretmans96b, Tretmans2008}. First, however, we need to introduce an additional condition C1 for IOTSs, for every $s,s' \in S$:

\begin{description}
\item[Condition C1:] if $s \trans{\delta} s'$, then for all $\sigma \in \traces{s'}$:
\[
\exists t' \in \reach{s', \sigma} \suchthat t' \text{ is quiescent } \land t' \ntrans{\delta}{} \Rightarrow \forall t \in \reach{s, \sigma} \suchthat t \text{ is quiescent}\land t \ntrans{\delta} 
\]
\end{description}

Condition C1 requires that if any trace $\sigma \in \traces{s'}$, when executed from $s'$, can lead to a state that is quiescent and cannot execute a $\delta$-transition, then it must always lead to a state that is quiescent and cannot execute a $\delta$-transition when executed from $s$.
This condition is weaker than R1, and allows us to determine the deltafication of systems that already contain some $\delta$-transitions without requiring a $\delta$-transition from every quiescent state. Note that any IOTS without $\delta$-transitions vacuously satisfies C1.
\begin{definition}[Deltafication]
\label{def:deltafication}
Given an IOTS $\aut{A} = \tuple{S, S^0, \Lin, \Lout, \etransA}$ that for all $s, s' \in S$ satisfies deltafication condition C1, and rules R2, R3 and R4 (see Definition~\ref{def:qts}), we define the \emph{deltafication} of \aut{A} as the QTS $\deltaf{\aut{A}} = \tuple{S, S^0, \Lin, \Lout \union \set{ \delta }, \etransDelta}$ where $\etransDelta {} = {} \etransA {} \union \set{ (s, \delta, s) \in S \times \set{ \delta } \times S \where \linebreak s \; \text{is quiescent} \land s \ntransA{\delta} }$.
\end{definition}

\begin{figure}
\hfill
\subfigure[\aut{A}]{
\begin{tikzpicture}
	\node[initial, state] (s0) {};
	\node[state] (s1) [below left of=s0] {};
	\node[state] (s2) [below right of=s0] {};
	\node[state] (s3) [below left of=s2] {};
	\node[state] (s4) [below right of=s2] {};
	\node[state] (s5) [below of=s3] {};
	\node[state] (s6) [below of=s4] {};
	
	\path	(s0) edge node [swap] {$b!$} (s1)
			(s0) edge node {$a?$} (s2)
			(s2) edge node {$\tau$} (s4)
			(s4) edge node {$\tau$} (s3)
			(s3) edge node [swap] {$c?$} (s5)
			(s4) edge node {$d?$} (s6);
\end{tikzpicture}
}
\hfill
\subfigure[\deltaf{\aut{A}}]{
\begin{tikzpicture}
	\node[initial, state] (s0) {};
	\node[state] (s1) [below left of=s0] {};
	\node[state] (s2) [below right of=s0] {};
	\node[state] (s3) [below left of=s2] {};
	\node[state] (s4) [below right of=s2] {};
	\node[state] (s5) [below of=s3] {};
	\node[state] (s6) [below of=s4] {};
	
	\path	(s0) edge node [swap] {$b!$} (s1)
			(s1) edge [loop left] node {$\delta$} (s1)
			(s0) edge node {$a?$} (s2)
			(s2) edge node {$\tau$} (s4)
			(s4) edge node {$\tau$} (s3)
			(s3) edge [loop left] node {$\delta$} (s3)
			(s3) edge node {$c?$} (s5)
			(s5) edge [loop left] node {$\delta$} (s5)
			(s4) edge node [swap] {$d?$} (s6)
			(s6) edge [loop right] node {$\delta$} (s6);
\end{tikzpicture}
}
\hfill
\subfigure[\aut{B}]{
\begin{tikzpicture}[node distance=1.5cm]
	\tikzstyle{every state}=[draw=black,line width=1pt,fill=white,minimum size=5pt]

	\node[initial,state] (s0) {\footnotesize $s_0$};
	\node[state] (s1) [right of=s0] {\footnotesize $s_1$};
	\node[state] (s3) [below of=s0] {\footnotesize $s_3$};
	\node[state] (s2) [left  of=s3] {\footnotesize $s_2$};
	\node[state] (s4) [below of=s1] {\footnotesize $s_4$};
	\node[state] (s6) [below of=s3] {\footnotesize $s_6$};
	\node[state] (s5) [left  of=s6] {\footnotesize $s_5$};
	\node[state] (s7) [below of=s4] {\footnotesize $s_7$};
	\node[state] (s8) [below of=s6] {\footnotesize $s_8$};
	\node[state] (s9) [below of=s7] {\footnotesize $s_9$};

	\path (s0) edge node {$\delta$}  (s1)
			 (s0) edge node {$a?$}  (s3)
			 (s1) edge node {$a?$}  (s4)
			 (s0) edge [swap] node {$a?$}  (s2)
			 (s3) edge node {$b?$} (s6)
			 (s3) edge [swap] node {$c!$} (s5)
			 (s6) edge node {$c!$} (s8)
			 (s4) edge node {$b?$}  (s7)
			 (s7) edge node {$c!$}  (s9);
\end{tikzpicture}
}
\hfill
\subfigure[\deltaf{\aut{B}}]{
\begin{tikzpicture}[node distance=1.5cm]
	\tikzstyle{every state}=[draw=black,line width=1pt,fill=white,minimum size=5pt]

	\node[initial,state] (s0) {\footnotesize $s_0$};
	\node[state] (s1) [right of=s0] {\footnotesize $s_1$};
	\node[state] (s3) [below of=s0] {\footnotesize $s_3$};
	\node[state] (s2) [left  of=s3] {\footnotesize $s_2$};
	\node[state] (s4) [below of=s1] {\footnotesize $s_4$};
	\node[state] (s6) [below of=s3] {\footnotesize $s_6$};
	\node[state] (s5) [left  of=s6] {\footnotesize $s_5$};
	\node[state] (s7) [below of=s4] {\footnotesize $s_7$};
	\node[state] (s8) [below of=s6] {\footnotesize $s_8$};
	\node[state] (s9) [below of=s7] {\footnotesize $s_9$};

	\path (s0) edge node {$\delta$}  (s1)
			 (s1) edge [loop right] node {$\delta$} (s1)
			 (s4) edge [loop right] node {$\delta$} (s4)
			 (s9) edge [loop right] node {$\delta$} (s9)
			 (s8) edge [loop left ] node {$\delta$} (s8)
			 (s5) edge [loop below] node {$\delta$} (s5)
			 (s2) edge [loop above] node {$\delta$} (s2)
			 (s0) edge node {$a?$}  (s3)
			 (s1) edge node {$a?$}  (s4)
			 (s0) edge [swap, pos=0.3] node {$a?$}  (s2)
			 (s3) edge node {$b?$} (s6)
			 (s3) edge [swap, pos=0.3] node {$c!$} (s5)
			 (s6) edge node {$c!$} (s8)
			 (s4) edge node {$b?$}  (s7)
			 (s7) edge node {$c!$}  (s9);
\end{tikzpicture}
}
\caption{Deltafications of the IOTSs \aut{A} and \aut{B}.}
\label{fig:deltafication_example}
\end{figure}
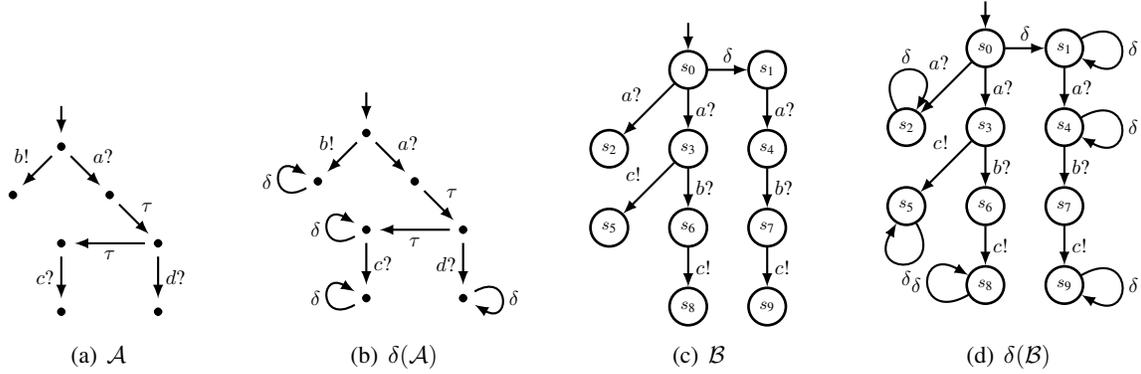

\begin{example}
An IOTS \aut{A} and its deltafication \deltaf{\aut{A}} are shown in Figure~\ref{fig:deltafication_example}(a) and~\ref{fig:deltafication_example}(b), respectively. \qed
\end{example}

\begin{remark}
To see why condition C1 is necessary, consider the IOTS \aut{B} and its deltafication \deltaf{\aut{B}} shown in Figure~\ref{fig:deltafication_example}(c) and Figure~\ref{fig:deltafication_example}(d), respectively; the states have been labelled for convenience. \aut{B} does not satisfy condition C1, since $s_0 \trans{\delta} s_1$, $s_4 \in \reach{s_1, a}$ and $s_4$ is quiescent and $s_4 \ntrans{\delta}$, but $s_3 \in \reach{s_0, a}$ and $s_3$ is not quiescent. As a consequence, the deltafication \deltaf{\aut{B}} is not a valid QTS: for \deltaf{\aut{B}} we have $a\delta bc \in \traces{s_1}$, but $a\delta bc \notin \traces{s_0}$, thereby violating rule R3.

A more liberal version of C1, where the second quantification is changed to an existential one, would not be strong enough to prevent this: it would not forbid this example, as $s_2 \in \reach{s_0, a}$ is quiescent and cannot do a $\delta$-transition.
\end{remark}

\subsection{Validity of deltafication}
Now, we present several interesting properties regarding the deltafication of IOTSs and QTSs. First,
we show that deltafication indeed yields a valid QTS, and that it is idempotent.

\newcommand{\lemmaDeltaficationYieldsValidQTS}{%
Given an IOTS \aut{A} that satisfies condition C1 and rules R2, R3 and R4, \deltaf{\aut{A}} is a QTS.
}

\begin{lemma}
\label{lemma:deltafication_yields_valid_qts}
\lemmaDeltaficationYieldsValidQTS
\end{lemma}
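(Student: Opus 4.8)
The plan is to show that $\deltaf{\aut{A}}$ satisfies all defining properties of a QTS: it must be a convergent, input-enabled IOTS over the label set $\Lout \union \set{\delta}$, and it must satisfy rules R1--R4. I would proceed rule by rule, leaning on the fact that deltafication only adds $\delta$-self-loops at quiescent states that did not already have an outgoing $\delta$-transition, and therefore changes neither the set of reachable states nor the set of traces over the original alphabet $L$. First I would record the structural bookkeeping: input-enabledness is inherited from \aut{A} because no input transitions are removed; convergence is preserved because the only new transitions are $\delta$-labelled self-loops, which cannot create $\tau$-loops; and the new label $\delta$ is indeed a fresh output label as required by Definition~\ref{def:qts}.

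Next I would verify the four rules. \textbf{Rule R1} is the point of the construction: if a state $s$ is quiescent, then either it already had a $\delta$-transition in \aut{A}, or deltafication adds the self-loop $(s,\delta,s)$; either way $s \trans{\delta}$ in $\deltaf{\aut{A}}$, so R1 holds outright (note this does not even need C1). For \textbf{Rule R2} I would split on whether a given $\delta$-transition $s \transDelta{\delta} s'$ is old or new. A newly added loop has $s' = s$ quiescent by construction, and since deltafication adds no outputs or $\tau$-transitions, $s$ remains quiescent in $\deltaf{\aut{A}}$; an old $\delta$-transition already landed in a state that was quiescent in \aut{A} by the assumed R2 for \aut{A}, and quiescence is again preserved.

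The more delicate cases are \textbf{R3} and \textbf{R4}, and here is where condition C1 does the real work, so I expect this to be the main obstacle. The subtlety is that adding $\delta$-loops can enlarge $\traces{s'}$ and $\traces{s}$ simultaneously, so I must argue the inclusions survive the enrichment rather than merely holding for the original traces. For R3, given $s \transDelta{\delta} s'$ I would take $\sigma \in \traces{s'}$ in $\deltaf{\aut{A}}$ and show $\sigma \in \traces{s}$; the argument would proceed by induction on the number of newly-inserted $\delta$-symbols in $\sigma$, reducing to the claim that whenever a fresh $\delta$-loop is available at a state reached from $s'$, the corresponding state reached from $s$ must also carry a $\delta$-loop (i.e.\ be quiescent and $\delta$-free in \aut{A}) --- which is precisely the implication guaranteed by C1, combined with R3 for the original automaton to transfer the non-$\delta$ portions of the trace. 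The universal quantifier in C1 is essential here, as the Remark after Definition~\ref{def:deltafication} illustrates: an existential version would fail to guarantee that \emph{every} state reached from $s$ admits the loop. For \textbf{R4}, given $s \transDelta{\delta} s'$ and $s' \transDelta{\delta} s''$, I would use R2 (already established) to note $s'$ is quiescent, reason that any $\delta$-transition out of a quiescent $s'$ in $\deltaf{\aut{A}}$ must be the self-loop $s' \transDelta{\delta} s'$ (since a quiescent state has no outputs, the only $\delta$-moves available are old ones inherited under R4 for \aut{A} or the freshly added loop), forcing $s'' = s'$ and hence $\traces{s'} = \traces{s''}$ trivially. Throughout, the recurring lemma I would isolate and prove once is that $\reachA{s,\sigma} = \mathit{reach}_{\deltaf{\aut{A}}}(s,\sigma)$ for every $\sigma \in L^*$ not containing $\delta$, which cleanly separates the "old" behaviour from the "added" $\delta$-behaviour and lets C1 and the original rules R2--R4 be invoked at the right moments.
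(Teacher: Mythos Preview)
Your overall structure is sound, and the bookkeeping together with R1 and R2 go through as you describe. There is, however, a genuine error in your R4 argument and a related gap in R3.

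For \textbf{R4}, you assert that any $\delta$-transition out of the quiescent state $s'$ ``must be the self-loop $s' \transDelta{\delta} s'$\,\ldots\ forcing $s'' = s'$''. This does not follow. If $s'$ is quiescent but already carries an \emph{old} transition $s' \transA{\delta} s''$ with $s'' \neq s'$, then deltafication adds no self-loop at $s'$ (by construction) and the only $\delta$-moves out of $s'$ are the old ones, which need not be loops. You are implicitly invoking rule R4$'$ (state equality), but only R4 (trace equality $\tracesA{s'} = \tracesA{s''}$) is assumed. The case where both $\delta$-steps are old therefore needs a real argument that trace equality in $\aut{A}$ lifts to trace equality in $\deltaf{\aut{A}}$; this is of the same shape as R3 and again relies on~C1.

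For \textbf{R3}, the phrase ``the corresponding state reached from $s$'' conceals the difficulty. R3 for $\aut{A}$ yields only \emph{trace} inclusion, not any correspondence between intermediate states in $\reachA{s',\sigma_1}$ and $\reachA{s,\sigma_1}$. Condition~C1 does ensure that \emph{every} $t \in \reachA{s,\sigma_1}$ is quiescent and $\delta$-free, so the fresh $\delta$-loop is available at each such~$t$; but since the loop does not change state, you must still show that the continuation $\sigma_2$ lies in $\tracesDelta{t}$, whereas all you know is $\sigma_2 \in \tracesDelta{u'}$ for some (generally different) $u' \in \reachA{s',\sigma_1}$. Your sketch does not bridge this. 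One natural repair is to drop the fresh $\delta$ to obtain the strictly shorter $\sigma_1\sigma_2 \in \tracesDelta{s'}$ and apply the induction hypothesis, but even then the reinsertion of $\delta$ into the witnessing path from~$s$ is delicate: the state reached after the prefix $\sigma_1$ in $\deltaf{\aut{A}}$ may lie in $\reachDelta{s,\sigma_1} \setminus \reachA{s,\sigma_1}$ (because $\delta$'s in $\sigma_1$ can be realised as new self-loops on the $s$-side), and C1 speaks only about $\reachA{s,\sigma_1}$. The induction therefore needs to be formulated more carefully than ``count the new $\delta$'s and peel off the first one''.
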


\newcommand{\propositionDeltaficationIsIdempotent}{%
Deltafication is idempotent, i.e., given an IOTS \aut{A} that satisfies condition C1 and rules R2, R3 and R4, we have \deltaf{\deltaf{\aut{A}}} = \deltaf{\aut{A}}.
}

\begin{proposition}
\label{prop:deltafication_is_idempotent}
\propositionDeltaficationIsIdempotent
\end{proposition}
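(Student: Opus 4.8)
The plan is to unfold the definition of deltafication twice and show that the second application adds no transitions at all. First I would check that the outer deltafication is even well-defined: by Lemma~\ref{lemma:deltafication_yields_valid_qts}, $\deltaf{\aut{A}}$ is a QTS, so it satisfies rules R1--R4, and since R1 is stronger than condition C1 (a quiescent state always carries a $\delta$-transition, so the antecedent of C1 can never fire), $\deltaf{\aut{A}}$ also satisfies C1. Reading $\deltaf{\aut{A}}$ as an IOTS with output alphabet $\Lout \union \set{\delta}$, the preconditions of Definition~\ref{def:deltafication} are met, so $\deltaf{\deltaf{\aut{A}}}$ exists. Deltafication leaves the state set, the initial states and the input alphabet untouched, and the output alphabet becomes $(\Lout \union \set{\delta}) \union \set{\delta} = \Lout \union \set{\delta}$, so it remains only to compare the two transition relations.

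Next I would pin down what quiescence means in $\deltaf{\aut{A}}$. The crucial observation is that deltafication only adds $\delta$-self-loops to $\etransA$, so the $\Lout$-output transitions and $\tau$-transitions of $\deltaf{\aut{A}}$ coincide with those of $\aut{A}$. Consequently a state $s$ is quiescent in $\deltaf{\aut{A}}$ (with respect to its output alphabet $\Lout \union \set{\delta}$) if and only if $s$ is quiescent in $\aut{A}$ and $s \ntrans{\delta}$ in $\deltaf{\aut{A}}$.

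The heart of the argument is then to show that the set of $\delta$-self-loops the second deltafication would add, namely all $(s, \delta, s)$ with $s$ quiescent in $\deltaf{\aut{A}}$ and $s \ntrans{\delta}$ in $\deltaf{\aut{A}}$, is empty. Suppose such an $s$ existed. By the observation above, $s$ is quiescent in $\aut{A}$. If $s \transA{\delta}$ already holds in $\aut{A}$, that transition survives into $\deltaf{\aut{A}}$; otherwise the first deltafication adds the loop $(s, \delta, s)$ precisely because $s$ is quiescent with $s \ntransA{\delta}$. In either case $s \trans{\delta}$ in $\deltaf{\aut{A}}$, contradicting $s \ntrans{\delta}$. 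Hence the added set is empty, the two transition relations coincide, and $\deltaf{\deltaf{\aut{A}}} = \deltaf{\aut{A}}$.

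I expect the main obstacle to be purely bookkeeping around the meaning of ``quiescent'': one must be careful that, when $\deltaf{\aut{A}}$ is re-read as an IOTS, the label $\delta$ now counts as an ordinary output for the purpose of the quiescence test, which is exactly what guarantees that states already carrying a $\delta$-loop are no longer deemed quiescent. Getting this typing straight is what renders the second deltafication vacuous.
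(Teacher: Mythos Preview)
Your argument is correct and is exactly the natural one; the paper itself does not spell out a proof here (it defers all proofs to the technical report), but the surrounding text already contains the same reasoning you use: immediately after this proposition the paper observes that in a QTS ``it never occurs that $s$ is quiescent and does not enable a $\delta$-transition,'' which is precisely your key step. One small wording nit: when you say ``the antecedent of C1 can never fire,'' you mean the \emph{inner} hypothesis ``$t'$ is quiescent and $t' \ntrans{\delta}$'' inside C1, not the outer hypothesis $s \trans{\delta} s'$; the latter certainly can hold in $\deltaf{\aut{A}}$.
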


Any IOTS \aut{A} with $\delta \notin \LA$ vacuously satisfies condition C1 and rules R2, R3 and R4. Therefore, the following theorem follows directly from Lemma~\ref{lemma:deltafication_yields_valid_qts}.

\begin{theorem}
Given an IOTS \aut{A} such that $\delta \notin \LA$, \deltaf{\aut{A}} is a QTS.
\end{theorem}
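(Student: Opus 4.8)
The plan is to obtain this theorem as an immediate corollary of Lemma~\ref{lemma:deltafication_yields_valid_qts}. That lemma guarantees that \deltaf{\aut{A}} is a QTS whenever \aut{A} satisfies condition C1 together with rules R2, R3 and R4; hence it suffices to verify that every IOTS \aut{A} with $\delta \notin \LA$ meets these four hypotheses. Note that R1 need not be checked for \aut{A} itself, since the deltafication process is precisely what installs the $\delta$-transitions ensuring R1, and the lemma already accounts for this in concluding that the output is a valid QTS.

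First I would pin down the structural consequence of the assumption $\delta \notin \LA$. The transition relation of \aut{A} ranges over $\Ltau_{\aut{A}} = \LA \union \set{\tau}$. Since $\delta \notin \LA$ and $\delta \neq \tau$ (the quiescence label is a distinguished output symbol, distinct from the internal action $\tau$), we have $\delta \notin \Ltau_{\aut{A}}$. Consequently $\etransA$ contains no triple labelled $\delta$, so $s \ntransA{\delta}$ for every state $s \in S$; in particular there is no pair of states with $s \transA{\delta} s'$.

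Next I would observe that condition C1 and rules R2, R3 and R4 are all implications whose antecedent asserts the existence of a $\delta$-transition $s \trans{\delta} s'$ (and, in the case of R4, of two consecutive such transitions). As just established, this antecedent is never satisfiable in \aut{A}. Therefore each of C1, R2, R3 and R4 holds vacuously, independently of the remaining structure of \aut{A}. With all four hypotheses of Lemma~\ref{lemma:deltafication_yields_valid_qts} in place, I would then invoke the lemma directly to conclude that \deltaf{\aut{A}} is a QTS.

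There is essentially no obstacle in this argument; the statement is a packaging of the lemma for the common special case of a $\delta$-free IOTS. The only point meriting any care is the justification that $\delta \notin \LA$ genuinely excludes all $\delta$-labelled transitions, which rests on $\delta \neq \tau$ and on the fact that transitions are confined to $\Ltau_{\aut{A}}$. Once this is noted, both the vacuity of the four conditions and the appeal to Lemma~\ref{lemma:deltafication_yields_valid_qts} are routine.
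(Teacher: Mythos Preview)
Your proposal is correct and matches the paper's own argument essentially verbatim: the paper notes that any IOTS \aut{A} with $\delta \notin \LA$ vacuously satisfies condition C1 and rules R2, R3 and R4, and then invokes Lemma~\ref{lemma:deltafication_yields_valid_qts} directly. Your additional care in spelling out why $\delta \notin \LA$ excludes all $\delta$-labelled transitions is a minor elaboration of the same vacuity observation.
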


By Definition~\ref{def:qts}, QTSs are IOTSs that satisfy rules R1, R2, R3 and R4. Since every state $s$ 
in a QTS enables at least one output action or $\delta$ (due to rule R1), it never occurs that $s$ is quiescent and does not enable a $\delta$-transition, and hence every QTS satisfies condition C1 vacuously. 

By Lemma~\ref{lemma:deltafication_yields_valid_qts}, this immediately implies the following theorem.

\begin{theorem}
QTSs are closed under deltafication, i.e., given a QTS \aut{A}, \deltaf{\aut{A}} is also a QTS.
\end{theorem}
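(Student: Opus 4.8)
The plan is to derive the theorem directly from Lemma~\ref{lemma:deltafication_yields_valid_qts}, which already establishes that the deltafication of any IOTS satisfying condition C1 together with rules R2, R3 and R4 is a QTS. Since by Definition~\ref{def:qts} a QTS is precisely an input-enabled convergent IOTS satisfying R1, R2, R3 and R4, three of the four hypotheses of the Lemma are immediate: the given QTS \aut{A} satisfies R2, R3 and R4 by assumption. Hence the only thing left to verify is that \aut{A} also satisfies deltafication condition C1, after which the Lemma applies verbatim.

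The crux is therefore to show that C1 holds for every QTS, and the idea is that its inner implication is satisfied \emph{vacuously}. Fix states $s, s'$ with $s \trans{\delta} s'$ and any trace $\sigma \in \traces{s'}$. The antecedent of C1 requires a state $t' \in \reach{s', \sigma}$ that is simultaneously quiescent and satisfies $t' \ntrans{\delta}$. But rule R1 demands that every quiescent state of \aut{A} possess an outgoing $\delta$-transition, so no quiescent state $t'$ can satisfy $t' \ntrans{\delta}$. Consequently the existentially quantified antecedent is unsatisfiable, the implication holds trivially for every $\sigma$, and C1 is established. This observation — that R1 is strictly stronger than C1, so that the C1 antecedent can never fire in a QTS — is the only conceptual step, and it is short.

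With C1, R2, R3 and R4 all confirmed for \aut{A}, Lemma~\ref{lemma:deltafication_yields_valid_qts} immediately yields that \deltaf{\aut{A}} is a QTS, completing the argument. As a sanity check one may note, again via R1, that the set of self-loops added by deltafication, namely $\set{ (s, \delta, s) \where s \text{ is quiescent} \land s \ntransA{\delta} }$, is empty for any QTS, so in fact $\deltaf{\aut{A}} = \aut{A}$; thus the theorem may equivalently be read as saying that deltafication acts as the identity on QTSs, which makes its closure under the operation unsurprising. I do not expect any genuine obstacle here: the only care needed is in correctly reading the quantifier structure of C1 so as to see that R1 falsifies its antecedent rather than merely its consequent.
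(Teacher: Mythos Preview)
Your proposal is correct and matches the paper's argument essentially line for line: the paper also observes that, by rule R1, no state of a QTS can be quiescent while lacking an outgoing $\delta$-transition, so condition C1 holds vacuously, and then invokes Lemma~\ref{lemma:deltafication_yields_valid_qts}. Your additional remark that consequently $\deltaf{\aut{A}} = \aut{A}$ for any QTS \aut{A} is a correct and helpful sanity check that the paper does not spell out here (though it is implicit in the idempotence result of Proposition~\ref{prop:deltafication_is_idempotent}).
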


\subsection{Commutativity results}\label{sec:comm}
In this section we investigate the commutativity of deltafication with determinisation, action hiding and parallel composition. We will show that parallel composition can safely be swapped with deltafication, but that determinisation has to precede deltafication to get sensible results. Also, we show that action hiding does not commute with deltafication.

\begin{proposition}
Deltafication and determinisation do not commute, i.e., given an IOTS \aut{A} that satisfies condition C1 and rules R2, R3 and R4, it is not necessarily the case that $\deter{\deltaf{\aut{A}}} \isomorphic \deltaf{\deter{\aut{A}}}$.
\end{proposition}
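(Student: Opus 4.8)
The statement is a negation, so the plan is to exhibit a single counterexample: an IOTS $\aut{A}$ satisfying the hypotheses for which $\deter{\deltaf{\aut{A}}} \not\isomorphic \deltaf{\deter{\aut{A}}}$. Since isomorphic QTSs are in particular trace equivalent, it suffices to make the two composites differ on a single trace. The design principle I would exploit is that determinisation merges states via the subset construction, and a merged state containing both a quiescent and a non-quiescent branch is itself non-quiescent; hence deltafying \emph{after} determinising can miss a quiescence observation that deltafying \emph{before} would record. This matches the informal claim that determinisation has to precede deltafication to get sensible results.

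Concretely, I would take $\aut{A}$ with a single input $a?$, a single output $b!$, and states $s_0, s_1, s_2, s_3$ with $s_0 \trans{a?} s_1$ and $s_0 \trans{a?} s_2$ (the source of nondeterminism) and $s_2 \trans{b!} s_3$, together with $a?$-self-loops on $s_1, s_2, s_3$ for input-enabledness. Then $s_1$ and $s_3$ are quiescent while $s_2$ is not. As $\aut{A}$ contains no $\delta$-transition, it satisfies condition C1 and rules R2, R3, R4 vacuously, and it is plainly convergent and input-enabled, so all hypotheses hold.

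For the first composite I would compute $\deltaf{\aut{A}}$, which adds $\delta$-loops precisely to the quiescent states, in particular to $s_1$, and then determinise. The relevant reachable subset state is $\reach{\set{s_0}, a?} = \set{s_1, s_2}$; since $s_1 \trans{\delta} s_1$ we obtain $\reach{\set{s_1, s_2}, \delta} = \set{s_1} \neq \emptyset$, so $\set{s_1, s_2} \trans{\delta} \set{s_1}$ in $\deter{\deltaf{\aut{A}}}$, yielding the trace $a?\delta$. For the second composite I would first determinise $\aut{A}$, again reaching the subset state $\set{s_1, s_2}$ after $a?$; but in $\deter{\aut{A}}$ this state enables the output $b!$ and is therefore not quiescent, so deltafication adds no $\delta$-transition to it. Consequently $a?\delta \in \traces{\deter{\deltaf{\aut{A}}}}$ while $a?\delta \notin \traces{\deltaf{\deter{\aut{A}}}}$, which establishes $\deter{\deltaf{\aut{A}}} \not\isomorphic \deltaf{\deter{\aut{A}}}$.

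The routine bookkeeping (the input self-loops and the determinisation of the remaining states) is immaterial and I would suppress it. The only genuine care required is twofold: verifying that when $\deltaf{\aut{A}}$ is treated as an IOTS its output alphabet is $\LoutA \union \set{\delta}$, so that determinisation really does process $\delta$-transitions through the subset construction; and confirming that no alternative path in $\deltaf{\deter{\aut{A}}}$ produces $a?\delta$, which it cannot, since after reading $a?$ the determinised system sits deterministically in the non-quiescent state $\set{s_1, s_2}$. I expect the main obstacle to be expository rather than mathematical, namely choosing the smallest automaton that isolates the phenomenon, since the non-commutativity follows directly from the clash between state-merging and the output-based definition of quiescence.
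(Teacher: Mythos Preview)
Your proposal is correct and takes essentially the same approach as the paper: both exhibit a small IOTS with a nondeterministic input branch where one successor is quiescent and the other is not, so that the merged subset state after determinisation is non-quiescent and hence receives no $\delta$-loop, whereas deltafying first records the quiescence and the subsequent determinisation preserves the trace $a?\delta$. Your example is in fact slightly more minimal than the paper's (four states instead of five), but the mechanism and the distinguishing trace are identical.
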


\begin{proof}
Observe the IOTS \aut{A}, its determinisation \deter{\aut{A}} and deltafication \deltaf{\aut{A}} in Figure~\ref{fig:determinisation_and_deltafication}(a,b,c). Clearly, the deltafication of the determinisation of \aut{A} (i.e., \deltaf{\deter{\aut{A}}}), shown in Figure~\ref{fig:determinisation_and_deltafication}(d), results in an incorrect observation automaton, as it does not model the fact that in the nondeterministic QTS \deltaf{\aut{A}} quiescence may be observed after an initial $a?$ input, as required by rule R1.

\begin{figure}
\subfigure[\aut{A}]{
\begin{tikzpicture}
	\node[initial, state] (s0) {};
	\node[state] (s1) [below left of=s0] {};
	\node[state] (s2) [below right of=s0] {};
	\node[state] (s3) [below of=s1] {};
	\node[state] (s4) [below of=s2] {};

	\path (s0) edge node [swap]  {$a?$}  (s1)
		(s0) edge node {$a?$}  (s2)
		(s1) edge node [swap] {$b!$}  (s3)
		(s2) edge node {$a?$} (s4);
\end{tikzpicture}
}
\hfill
\subfigure[$\deter{\aut{A}}$]{
\begin{tikzpicture}
	\node[initial, state] (s0) {};
	\node[state] (s1) [below of=s0] {};
	\node[state] (s2) [below left of=s1] {};
	\node[state] (s3) [below right of=s1] {};

	\path (s0) edge node {$a?$}  (s1)
		(s1) edge node [swap] {$b!$}  (s2)
		(s1) edge node {$a?$} (s3);
\end{tikzpicture}
}
\hfill
\subfigure[$\deltaf{\aut{A}}$]{
\begin{tikzpicture}
	\node[initial, state] (s0) {};
	\node[state] (s1) [below left of=s0] {};
	\node[state] (s2) [below right of=s0] {};
	\node[state] (s3) [below of=s1] {};
	\node[state] (s4) [below of=s2] {};

	\path	(s0) edge node [swap]  {$a?$}  (s1)
			(s0) edge [loop above right] node {$\delta$} (s0)
			(s0) edge node {$a?$}  (s2)
			(s1) edge node {$b!$}  (s3)
			(s2) edge node [swap] {$a?$} (s4)
			(s2) edge [loop right] node {$\delta$} (s2)
			(s3) edge [loop below] node {$\delta$} (s3)
			(s4) edge [loop below] node {$\delta$} (s4);
\end{tikzpicture}
}
\hfill
\subfigure[$\deltaf{\deter{\aut{A}}}$]{
\begin{tikzpicture}
	\node[initial, state] (s0) {};
	\node[state] (s1) [below of=s0] {};
	\node[state] (s2) [below left of=s1] {};
	\node[state] (s3) [below right of=s1] {};

	\path (s0) edge [loop right] node {$\delta$} (s0) 
		(s0) edge node [swap]  {$a?$}  (s1)
		(s1) edge node [swap] {$b!$}  (s2)
		(s2) edge [loop below] node {$\delta$} (s2)
		(s1) edge node {$a?$} (s3)
		(s3) edge [loop below] node {$\delta$} (s3);
\end{tikzpicture}
}
\hfill
\subfigure[$\deter{\deltaf{\aut{A}}}$]{
\begin{tikzpicture}
	\node[initial, state] (s0) {};
	\node[state] (s1) [below of=s0] {};
	\node[state] (s2) [below of=s1] {};
	\node[state] (s3) [left of=s2] {};
	\node[state] (s4) [right of=s2] {};
	
	\path (s0) edge [loop right] node {$\delta$} (s0)
		(s0) edge node [swap] {$a?$} (s1)
		(s1) edge node [swap] {$b!$} (s3)
		(s3) edge [loop below] node {$\delta$} (s3)
		(s1) edge node [swap] {$\delta$} (s2)
		(s1) edge node {$a?$} (s4)
		(s2) edge node [swap] {$a?$} (s4)
		(s2) edge [loop below] node {$\delta$} (s2)
		(s4) edge [loop below] node {$\delta$} (s4);
\end{tikzpicture}
}
\caption{
The determinisation and deltafication of IOTS \aut{A} do not commute.}
\label{fig:determinisation_and_deltafication}
\end{figure}
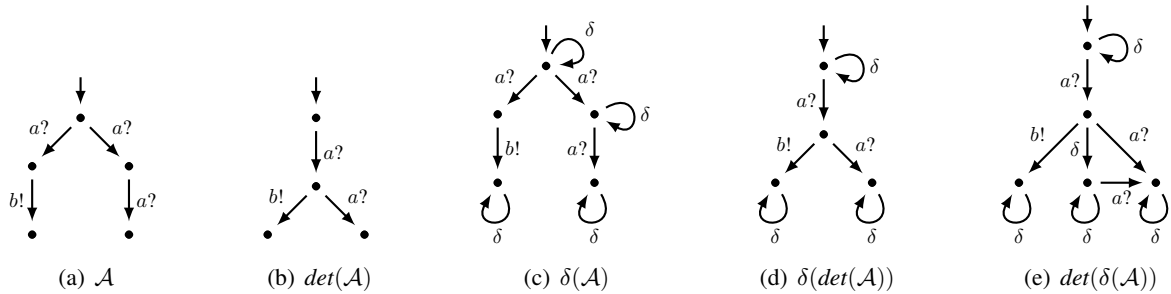

Contrary to the deltafication of the determinisation of \aut{A}
, the determinisation of the deltafication of \aut{A} (i.e., \deter{\deltaf{\aut{A}}}), which is shown in Figure~\ref{fig:determinisation_and_deltafication}(e), does preserve the fact that quiescence may be observed after an initial $a?$ input. This shouldn't come as a surprise, since for any IOTS \aut{A} the determinisation \deter{\aut{A}} is trace equivalent to the original automaton, as was observed earlier. \end{proof}

Thus, when transforming a nondeterministic IOTS \aut{A} to a deterministic QTS, one should take care to first derive \deltaf{\aut{A}} and afterwards determinise to obtain \deter{\deltaf{\aut{A}}}.

The following results show that deltafication does commute with both action hiding and parallel composition.
For action hiding this is trivial. After all, hiding only renames output actions to $\tau$
and deltafication only adds $\delta$-loops to states that have 
no outgoing output transitions, no outgoing $\tau$-transitions and no outgoing $\delta$-transition. Hence, 
they work on disjoint sets of states; commutativity is therefore immediate.  

\begin{theorem}
Deltafication and action hiding commute, i.e., given an IOTS \aut{A} that satisfies condition C1 and rules R2, R3 and R4, and a set of labels $H \subseteq \LoutA$, we have $\deltaf{\hide{\aut{A}, H}} \isomorphic \hide{\deltaf{\aut{A}}, H}$.
\end{theorem}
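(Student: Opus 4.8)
The plan is to show that the \emph{identity} map $h = \mathrm{id}_S$ is an isomorphism between the two QTSs. Both $\deltaf{\hide{\aut{A}, H}}$ and $\hide{\deltaf{\aut{A}}, H}$ are built on the same state set $S$ and the same set of initial states $S^0$, and since $H \subseteq \LoutA$ with $\delta \notin H$ (quiescence is never hidden), both carry input labels $\LinA$ and output labels $(\LoutA \setminus H) \union \set{\delta}$. Hence the two isomorphism conditions reduce to a single claim: the transition relations of the two automata coincide.

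First I would establish the central observation, namely that \emph{hiding preserves quiescence}: for every $s \in S$, the state $s$ is quiescent in $\aut{A}$ if and only if it is quiescent in $\hide{\aut{A}, H}$. This holds because hiding merely relabels the (output) actions in $H$ to $\tau$ and leaves every other transition — in particular every input, every output outside $H$, every $\tau$ and every $\delta$ — untouched. Thus a state with no outgoing output and no outgoing $\tau$ in $\aut{A}$ still has none in $\hide{\aut{A}, H}$, while any state made non-quiescent by an $H$-output in $\aut{A}$ retains a witness to non-quiescence (now a $\tau$) in $\hide{\aut{A}, H}$. Because $\delta \notin H$, hiding also leaves the $\delta$-transitions unchanged, so $s \ntransA{\delta}$ exactly when $s$ has no outgoing $\delta$-transition in $\hide{\aut{A}, H}$.

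With this in hand, equality of the transition relations follows by direct computation. Writing $D = \set{(s, \delta, s) \where s \text{ is quiescent in } \aut{A} \text{ and } s \ntransA{\delta}}$ for the $\delta$-loops added by deltafying $\aut{A}$, I would expand $\hide{\deltaf{\aut{A}}, H}$: since every transition in $D$ carries the label $\delta \notin H$, hiding $H$ in $\etransDelta = \etransA \union D$ leaves all of $D$ intact and acts on $\etransA$ exactly as it does in $\aut{A}$, yielding $\etransHide \union D$. On the other side, deltafying $\hide{\aut{A}, H}$ adds the $\delta$-loops $D' = \set{(s, \delta, s) \where s \text{ is quiescent in } \hide{\aut{A}, H} \text{ and has no outgoing } \delta\text{-transition}}$, giving $\etransHide \union D'$. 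The quiescence-preservation observation yields $D = D'$, so both relations equal $\etransHide \union D$. This is the rigorous form of the paper's remark that the two operations act on disjoint sets of states: hiding only alters states with an $H$-output (which are non-quiescent), whereas deltafication only alters quiescent states without a $\delta$-transition.

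The main obstacle is not a deep argument but the bookkeeping required to make quiescence preservation watertight and to confirm that the left-hand side is even well-defined — that is, that $\hide{\aut{A}, H}$ lies in the domain of deltafication. For the latter I would verify that $\hide{\aut{A}, H}$ inherits condition C1 and rules R2, R3 and R4 from $\aut{A}$: rule R2 is immediate from quiescence preservation together with the $\delta$-transitions being unchanged, while R3, R4 and C1 follow from the fact that the per-state traces of $\hide{\aut{A}, H}$ are the projections of those of $\aut{A}$ onto $\LA \setminus H$ (the state-wise form of Proposition~\ref{prop:hiding_and_projection}), so that the trace inclusions and equalities these rules demand are preserved under projection.
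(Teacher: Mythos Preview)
Your proposal is correct and matches the paper's approach exactly: the paper's entire argument is the informal observation (stated just before the theorem) that hiding only touches states with outgoing $H$-outputs while deltafication only touches quiescent states without a $\delta$-transition, so the two operations act on disjoint sets of states and therefore commute. Your proof is a careful formalisation of precisely this idea via the identity isomorphism and the quiescence-preservation lemma; you additionally take the trouble to check that $\hide{\aut{A},H}$ lies in the domain of deltafication, which the paper does not spell out here (the detailed proofs are deferred to the technical report).
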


\newcommand{\theoremDeltaficationAndParallelCompositionCommute}{%
Deltafication and parallel composition commute, i.e., given two IOTSs \aut{A} and \aut{B} with $\LoutA \intersection \LoutB = \emptyset$ that satisfy condition C1 and rules R2, R3 and R4, we have $\deltaf{\AparB} \isomorphic \deltaf{\aut{A}} \parallel \deltaf{\aut{B}}$.
}

\begin{theorem}
\label{thm:deltafication_and_parallel_composition_commute}
\theoremDeltaficationAndParallelCompositionCommute
\end{theorem}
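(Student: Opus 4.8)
The plan is to exhibit an explicit isomorphism and then verify the two conditions of Definition~3.7. Both $\deltaf{\AparB}$ and $\deltaf{\aut{A}} \parallel \deltaf{\aut{B}}$ are built on the state space $\SA \times \SB$: deltafication never alters the state set, parallel composition takes the product, and these two operations manifestly commute at the level of states. Their initial-state sets are both $\SstartA \times \SstartB$, and their alphabets both consist of inputs $\LinAparB$ and outputs $\LoutAparB \union \set{\delta}$. I therefore take $h = \mathrm{id}$ on $\SA \times \SB$ as the candidate isomorphism; condition (1) on initial states is then immediate, and the whole theorem reduces to showing that the two transition relations coincide, i.e.\ that for all $(s,t),(s',t')$ and all $a \in \LAparB \union \set{\delta,\tau}$ we have $(s,t) \trans{a} (s',t')$ in $\deltaf{\AparB}$ iff $(s,t) \trans{a} (s',t')$ in $\deltaf{\aut{A}} \parallel \deltaf{\aut{B}}$.

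I would split on whether $a = \delta$. For $a \neq \delta$ the equivalence is essentially bookkeeping. Deltafication adds only $\delta$-transitions, so the $a$-transitions of $\deltaf{\AparB}$ are exactly those of $\AparB$, which by Definition~\ref{def:parallel_iots} are built from the non-$\delta$ transitions of $\aut{A}$ and $\aut{B}$ (IOTS parallel composition ignores $\delta$). On the right, restricting the QTS parallel-composition rules to a label $a \neq \delta$ gives precisely the IOTS rules applied to the non-$\delta$ transitions of $\deltaf{\aut{A}}$ and $\deltaf{\aut{B}}$, and those coincide with the non-$\delta$ transitions of $\aut{A}$ and $\aut{B}$. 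Hence both sides produce the same $a$-transitions, with input-enabledness used exactly as in Definition~\ref{def:parallel_iots}.

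The crux is the case $a = \delta$, and the key ingredient is a quiescence-characterisation lemma: $(s,t)$ is quiescent in $\AparB$ iff $s$ is quiescent in $\aut{A}$ and $t$ is quiescent in $\aut{B}$. I would prove this by noting that an output $a \in \LoutA$ is enabled in $(s,t)$ iff $s \transA{a}$ (whether $a$ is interleaved, when $a \notin \LB$, or synchronised with an input of $\aut{B}$, which is always available by input-enabledness), symmetrically for $\LoutB$, and that $\tau$ is enabled in $(s,t)$ iff $s \transA{\tau}$ or $t \transB{\tau}$. Thus $(s,t)$ enables some output-or-$\tau$ iff $s$ does in $\aut{A}$ or $t$ does in $\aut{B}$; contrapositively, $(s,t)$ is quiescent iff both components are. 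Given this, in the $\delta$-free situation both sides add exactly the self-loop $((s,t),\delta,(s,t))$ on each jointly-quiescent pair: on the left because $\AparB$ carries no $\delta$-transitions and deltafication loops every quiescent state, and on the right because the $\delta$-synchronisation rule pairs the two $\delta$-loops of $\deltaf{\aut{A}}$ and $\deltaf{\aut{B}}$.

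The main obstacle is the interaction between $\delta$-transitions already present in $\aut{A}$ or $\aut{B}$ and the side-condition ``$s$ quiescent $\land\; s \ntrans{\delta}$'' that governs which loops deltafication adds. Since IOTS parallel composition does not synchronise on $\delta$, I must track carefully how such pre-existing $\delta$-transitions are handled on each side and verify that, modulo the identity on states, the $\delta$-transitions still match up. This is exactly where conditions C1 and R2--R4 come into play: C1 constrains where a genuinely quiescent, $\delta$-free state may sit relative to a $\delta$-successor, while R2 (successors of $\delta$ are quiescent) and R4 (continued quiescence preserves behaviour) pin down the shape of the $\delta$-reachable states, so that the deltafications of the components and of the composite introduce $\delta$-transitions in corresponding places. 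Once the $\delta$-case equivalence is established alongside the routine $a \neq \delta$ case, the identity map satisfies condition (2) of the isomorphism definition, and the theorem follows.
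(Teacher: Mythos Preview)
The paper does not include a proof of this theorem; all proofs are deferred to the technical report~\cite{techrep}. So there is no in-paper argument to compare against, and I can only assess your proposal on its own merits.

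Your overall strategy---take the identity on $\SA \times \SB$ as the isomorphism, reduce to equality of transition relations, and use as the key step the lemma that $(s,t)$ is quiescent in $\AparB$ iff $s$ is quiescent in $\aut{A}$ and $t$ is quiescent in $\aut{B}$---is exactly the natural line and is almost certainly what the deferred proof does. Your treatment of the case $a \neq \delta$, and of the $\delta$-case when neither component carries pre-existing $\delta$-transitions, is correct and complete.

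Your final paragraph, however, does not actually resolve the obstacle it names. If $\delta$ already occurs in, say, $\aut{A}$ and the IOTS composition of Definition~\ref{def:parallel_iots} is applied literally, then $\delta$ is an ordinary output label and is \emph{interleaved}: one gets $(s,t) \transAparB{\delta} (s',t)$ whenever $s \transA{\delta} s'$, regardless of whether $t$ is quiescent in $\aut{B}$. On the right-hand side the QTS composition \emph{synchronises} on $\delta$, so such a transition exists only when $t$ is quiescent. These do not coincide, and merely citing C1 and R2--R4 for $\aut{A}$ and $\aut{B}$ separately does not close the gap. What you actually need to argue is one of two things: either that the hypotheses force the situation to be vacuous in such cases (e.g.\ because $\AparB$ then violates R2 at $(s',t)$, so $\deltaf{\AparB}$ is not defined), or that the intended reading has $\delta \notin \LA \cup \LB$ to begin with (which is what is needed anyway for $\deltaf{\aut{A}}$ and $\deltaf{\aut{B}}$ to be QTSs in the sense of Definition~\ref{def:qts}, where $\delta \notin \Lin \cup \Lout$). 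Make whichever of these you intend explicit; as written, the last paragraph gestures at the difficulty without discharging it.
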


These results are vital, as they allow great modelling flexibility. After all,
hiding and parallel composition are often already applied to the IOTSs that describe a specification 
and its implementation. We now showed that this yields the same QTSs as in case these operations are applied after deltafication.

\section{Application to testing}\label{sec:testing}

\newcommand{\ioco}{\texttt{ioco}}
\newcommand{\iocorel}{\mathrel{\sqsubseteq_\ioco}}
\newcommand{\out}{\mathit{out}}
\newcommand{\aimpl}{\aut{A}_{\text{impl}}}
\newcommand{\aspec}{\aut{A}_{\text{spec}}}

Our main motivation for introducing and studying the QTS model was to enable a clean theoretical framework for model-based testing. In this section, we illustrate how the model can be incorporated in the \texttt{ioco} (input-output conformance) testing theory~\cite{Tretmans2008}. 

\subsection{A conformance relation based on QTSs}
To interpret the results of testing, we need to know which implementations are considered correct.
For this, we use a conformance relation, such as \ioco, that relates specifications to implementations if and only if the latter is `correct' with respect to the former. For \ioco, this is the case if the implementation never provides an unexpected output when it is only fed inputs that are allowed according to the specification. In this setting, an unexpected absence of outputs of the implementation is also considered to be unexpected output. This can be formalised very nicely using QTSs, as they already model the expected absence of outputs by explicit $\delta$-transitions.

\begin{definition}{}

Let $\aimpl, \aspec$ be QTSs over the same alphabet $\Lout \union \Lin \union \{\delta\}$.
Then
\[
   \aimpl \iocorel \aspec \text{ if and only if }
       \forall \sigma \in \traces{\aspec} \suchthat \out_{\aimpl}(\sigma) \subseteq \out_{\aspec}(\sigma),
\]
where $\out_\aut{A}(\sigma) = \{a! \in \Lout \union \{\delta\} \mid \sigma a! \in \traces{\aut{A}}\}$.
\end{definition}
Since we require all QTSs to be input-enabled, it is easy to see that \ioco-conformance precisely corresponds to traditional trace inclusion over QTSs.

\begin{example}
Consider the specification $\aspec$ given in Figure~\ref{fig:ioco_example}. It allows the initial state to either be quiescent, output an $a!$ or output a $b!$. We present four implementations.
The first two implementations are \ioco-correct with respect to $\aspec$: although they omit some of the traces of the specification, they never provide an unexpected output after a trace that is in the specification. The third implementation is erroneous since it can provide a $d!$ output from the initial state, while the specification does not allow this. The fourth implementation is erroneous since it is unexpectedly quiescent after the trace $c?$.
\qed

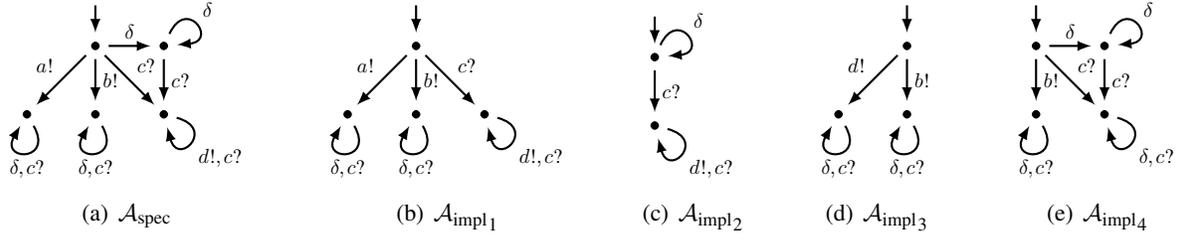
\begin{figure}
\subfigure[$\aspec$]{
\begin{tikzpicture}
	\node[initial, state] (s0) {};
	\node[state] (s1) [right of=s0] {};
	\node[state] (s2) [below of=s1] {};
	\node[state] (s3) [below of=s0] {};
	\node[state] (s4) [left  of=s3] {};

	\path		(s0) edge node {$\delta$}  (s1)
				(s1) edge node {$c?$}  (s2)
				(s0) edge node {$c?$}  (s2)
				(s0) edge node {$b!$}  (s3)
				(s0) edge node [swap] {$a!$}  (s4)
				(s1) edge [loop above right] node {$\delta$}  (s1)
				(s2) edge [loop below right] node {$d!, c?$}  (s2)
				(s3) edge [loop below] node {$\delta, c?$}  (s3)
				(s4) edge [loop below] node {$\delta, c?$}  (s4);
\end{tikzpicture}
}
\hfill
\subfigure[${\aimpl}_1$]{
\begin{tikzpicture}
	\node[initial, state] (s0) {};
	\node[] (s1) [right of=s0] {};
	\node[state] (s2) [below of=s1] {};
	\node[state] (s3) [below of=s0] {};
	\node[state] (s4) [left  of=s3] {};

	\path		(s0) edge node {$c?$}  (s2)
				(s0) edge node {$b!$}  (s3)
				(s0) edge node [swap] {$a!$}  (s4)
				(s2) edge [loop below right] node {$d!, c?$}  (s2)
				(s3) edge [loop below] node {$\delta, c?$}  (s3)
				(s4) edge [loop below] node {$\delta, c?$}  (s4);
\end{tikzpicture}
}
\hfill
\subfigure[${\aimpl}_2$]{
\begin{tikzpicture}
	\node[initial, state] (s0) {};
	\node[state] (s1) [below of=s0] {};

	\path		(s0) edge [loop above right] node {$\delta$}  (s0)
				(s0) edge node {$c?$}  (s1)
				(s1) edge [loop below right] node {$d!, c?$}  (s1);
\end{tikzpicture}
}
\hfill
\subfigure[${\aimpl}_3$]{
\begin{tikzpicture}
	\node[initial, state] (s0) {};
	\node[state] (s3) [below of=s0] {};
	\node[state] (s4) [left  of=s3] {};

	\path		(s0) edge node {$b!$}  (s3)
				(s0) edge node [swap] {$d!$}  (s4)
				(s3) edge [loop below] node {$\delta, c?$}  (s3)
				(s4) edge [loop below] node {$\delta, c?$}  (s4);
\end{tikzpicture}
}
\hfill
\subfigure[${\aimpl}_4$]{
\begin{tikzpicture}
	\node[initial, state] (s0) {};
	\node[state] (s1) [right of=s0] {};
	\node[state] (s2) [below of=s1] {};
	\node[state] (s3) [below of=s0] {};

	\path		(s0) edge node {$\delta$}  (s1)
				(s1) edge node {$c?$}  (s2)
				(s0) edge node {$c?$}  (s2)
				(s0) edge node {$b!$}  (s3)
				(s1) edge [loop above right] node {$\delta$}  (s1)
				(s2) edge [loop below right] node {$\delta, c?$}  (s2)
				(s3) edge [loop below] node {$\delta, c?$}  (s3);
\end{tikzpicture}
}
\caption{A specification with two correct and two erroneous implementations.}
\label{fig:ioco_example}
\end{figure}

\end{example}
Note that QTSs allowed us in this example to explicitly model the fact that both quiescence and some output actions are considered correct behaviour of a system. Also, note that the unexpected quiescence of the fourth implementation is clearly marked by a $\delta$-transition in the QTS. 

\subsection{Testing using QTSs}

Using the notion of \ioco-correspondence, it is quite easy to derive test cases for QTSs. Basically, at each point in time we choose to either try to provide an input, observe the behaviour of the system or stop testing. As long as the trace we obtain in this way (including the $\delta$-actions) is also a trace of the specification, the implementation is correct. Due to the explicit presence of quiescence in the QTS model of the specification, it is easy to see that this straightforward way of testing precisely corresponds to checking \ioco-conformance.

\section{Conclusions and Future Work}\label{sec:conclusions}

We introduced the notion of quiescent transition systems (QTSs), explicitly modelling the 
absence of outputs as a first-class citizen. We provided four restrictions for QTSs, 
to eliminate counterintuitive behaviours. Also, we defined the common automaton operations
--- parallel composition, determinisation and action hiding --- directly on QTSs, and showed 
that all of our restrictions are indeed preserved by the operations. We presented a way to
obtain a QTS from a traditional input-output transition system (IOTS), even allowing the situation
in which the IOTS already partially models quiescence. Finally, we illustrated how our novel 
theory of QTSs can be used to greatly simplify the theory of model-based testing, defining the 
conformance relation \texttt{ioco} in terms of QTSs.

So far, we only allowed input-enabled and convergent QTSs; i.e., systems that cannot perform an endless series of unobservable transitions. Future work will focus on 
extending our framework to divergent systems that are not necessarily input-enabled. Also, we plan on linking QTSs to 
timed automata, to explicitly represent $\delta$-transitions as finite timeouts,
bridging the gap between formal and practical testing.

\paragraph{Acknowledgements}
This research has been partially funded by NWO under grants 612.063.817 \linebreak (SYRUP) and Dn 63-257 (ROCKS).

\bibliographystyle{eptcs}
\bibliography{report}

\begin{thebibliography}{10}
\providecommand{\bibitemdeclare}[2]{}
\providecommand{\urlprefix}{Available at }
\providecommand{\url}[1]{\texttt{#1}}
\providecommand{\href}[2]{\texttt{#2}}
\providecommand{\urlalt}[2]{\href{#1}{#2}}
\providecommand{\doi}[1]{doi:\urlalt{http://dx.doi.org/#1}{#1}}
\providecommand{\bibinfo}[2]{#2}

\bibitemdeclare{inproceedings}{Aarts10}
\bibitem{Aarts10}
\bibinfo{author}{F.~Aarts} \& \bibinfo{author}{F.~W. Vaandrager}
  (\bibinfo{year}{2010}): \emph{\bibinfo{title}{Learning {I/O} Automata}}.
\newblock In: {\sl \bibinfo{booktitle}{Proc. of the 21th Int. Conf. on
  Concurrency Theory (CONCUR)}}, {\sl \bibinfo{series}{LNCS}}
  \bibinfo{volume}{6269}, \bibinfo{publisher}{Springer}, pp.
  \bibinfo{pages}{71--85}, \doi{10.1007/978-3-642-15375-4\_6}.

\bibitemdeclare{book}{katoen2008}
\bibitem{katoen2008}
\bibinfo{author}{C.~Baier} \& \bibinfo{author}{J.-P. Katoen}
  (\bibinfo{year}{2008}): \emph{\bibinfo{title}{Principles of Model Checking}}.
\newblock \bibinfo{publisher}{The MIT Press}.

\bibitemdeclare{inproceedings}{BS08}
\bibitem{BS08}
\bibinfo{author}{H.~C. Bohnenkamp} \& \bibinfo{author}{M.~I.~A. Stoelinga}
  (\bibinfo{year}{2008}): \emph{\bibinfo{title}{Quantitative testing}}.
\newblock In: {\sl \bibinfo{booktitle}{Proc. of the 8th ACM {\&} IEEE Int.
  Conf. on Embedded software (EMSOFT)}}, \bibinfo{publisher}{ACM}, pp.
  \bibinfo{pages}{227--236}, \doi{10.1145/1450058.1450089}.

\bibitemdeclare{article}{DeNicola1995}
\bibitem{DeNicola1995}
\bibinfo{author}{R.~De~Nicola} \& \bibinfo{author}{R.~Segala}
  (\bibinfo{year}{1995}): \emph{\bibinfo{title}{A process algebraic view of
  input/output automata}}.
\newblock {\sl \bibinfo{journal}{Theoretical Computer Science}}
  \bibinfo{volume}{138}, pp. \bibinfo{pages}{391--423},
  \doi{10.1016/0304-3975(95)92307-J}.

\bibitemdeclare{inproceedings}{LT87}
\bibitem{LT87}
\bibinfo{author}{N.~A. Lynch} \& \bibinfo{author}{M.~R. Tuttle}
  (\bibinfo{year}{1987}): \emph{\bibinfo{title}{Hierarchical Correctness Proofs
  for Distributed Algorithms}}.
\newblock In: {\sl \bibinfo{booktitle}{Proc. of the 6th Annual ACM Symp. on
  Principles of Distributed Computing (PODC)}}, pp. \bibinfo{pages}{137--151},
  \doi{10.1145/41840.41852}.

\bibitemdeclare{inproceedings}{Segala93}
\bibitem{Segala93}
\bibinfo{author}{R.~Segala} (\bibinfo{year}{1993}):
  \emph{\bibinfo{title}{Quiescence, Fairness, Testing, and the Notion of
  Implementation}}.
\newblock In: {\sl \bibinfo{booktitle}{Proc. of 4th Int. Conf. on Concurrency
  Theory (CONCUR)}}, {\sl \bibinfo{series}{LNCS}} \bibinfo{volume}{715},
  \bibinfo{publisher}{Springer}, pp. \bibinfo{pages}{324--338},
  \doi{10.1007/3-540-57208-2\_23}.

\bibitemdeclare{article}{Segala97}
\bibitem{Segala97}
\bibinfo{author}{R.~Segala} (\bibinfo{year}{1997}):
  \emph{\bibinfo{title}{Quiescence, Fairness, Testing, and the Notion of
  Implementation}}.
\newblock {\sl \bibinfo{journal}{Information and Computation}}
  \bibinfo{volume}{138}(\bibinfo{number}{2}), pp. \bibinfo{pages}{194 -- 210},
  \doi{10.1006/inco.1997.2652}.

\bibitemdeclare{techreport}{techrep}
\bibitem{techrep}
\bibinfo{author}{W.~G.~J. Stokkink}, \bibinfo{author}{M.~Timmer} \&
  \bibinfo{author}{M.~I.~A. Stoelinga} (\bibinfo{year}{2012}):
  \emph{\bibinfo{title}{Talking quiescence: a rigorous theory that supports
  parallel composition, action hiding and determinisation (extended version)}}.
\newblock \bibinfo{type}{Technical Report} \bibinfo{number}{TR-CTIT-12-05},
  \bibinfo{institution}{CTIT, University of Twente}.

\bibitemdeclare{book}{sudkamp2006}
\bibitem{sudkamp2006}
\bibinfo{author}{T.~A. Sudkamp} (\bibinfo{year}{2006}):
  \emph{\bibinfo{title}{Languages and machines}}.
\newblock \bibinfo{publisher}{Pearson Addison Wesley}.

\bibitemdeclare{incollection}{Timmer2011}
\bibitem{Timmer2011}
\bibinfo{author}{M.~{Timmer}}, \bibinfo{author}{H.~{Brinksma}} \&
  \bibinfo{author}{M.~I.~A. {Stoelinga}} (\bibinfo{year}{2011}):
  \emph{\bibinfo{title}{Model-Based Testing}}.
\newblock In: {\sl \bibinfo{booktitle}{Software and Systems Safety:
  Specification and Verification}}, {\sl \bibinfo{series}{NATO Science for
  Peace and Security Series D: Information and Communication
  Security}}~\bibinfo{volume}{30}, \bibinfo{publisher}{IOS Press},
  \bibinfo{address}{Amsterdam}, pp. \bibinfo{pages}{1--32},
  \doi{10.3233/978-1-60750-711-6-1}.

\bibitemdeclare{inproceedings}{Tretmans96}
\bibitem{Tretmans96}
\bibinfo{author}{G.~J. Tretmans} (\bibinfo{year}{1996}):
  \emph{\bibinfo{title}{Test Generation with Inputs, Outputs, and Quiescence}}.
\newblock In: {\sl \bibinfo{booktitle}{Proceedings of the 2nd Int. Workshop on
  Tools and Algorithms for Construction and Analysis of Systems (TACAS)}}, {\sl
  \bibinfo{series}{LNCS}} \bibinfo{volume}{1055},
  \bibinfo{publisher}{Springer}, pp. \bibinfo{pages}{127--146},
  \doi{10.1007/3-540-61042-1\_42}.

\bibitemdeclare{article}{Tretmans96b}
\bibitem{Tretmans96b}
\bibinfo{author}{G.~J. Tretmans} (\bibinfo{year}{1996}):
  \emph{\bibinfo{title}{Test Generation with Inputs, Outputs and Repetitive
  Quiescence}}.
\newblock {\sl \bibinfo{journal}{Software - Concepts and Tools}}
  \bibinfo{volume}{17}(\bibinfo{number}{3}), pp. \bibinfo{pages}{103--120}.

\bibitemdeclare{inproceedings}{Tretmans2008}
\bibitem{Tretmans2008}
\bibinfo{author}{G.~J. Tretmans} (\bibinfo{year}{2008}):
  \emph{\bibinfo{title}{Model Based Testing with Labelled Transition Systems}}.
\newblock In: {\sl \bibinfo{booktitle}{Formal Methods and Testing}}, {\sl
  \bibinfo{series}{LNCS}} \bibinfo{volume}{4949},
  \bibinfo{publisher}{Springer}, pp. \bibinfo{pages}{1--38},
  \doi{10.1007/978-3-540-78917-8\_1}.

\bibitemdeclare{inproceedings}{Vaandrager91}
\bibitem{Vaandrager91}
\bibinfo{author}{F.~W. Vaandrager} (\bibinfo{year}{1991}):
  \emph{\bibinfo{title}{On the Relationship Between Process Algebra and
  Input/Output Automata (Extended Abstract)}}.
\newblock In: {\sl \bibinfo{booktitle}{Proc. of 6th Annual Symposium on Logic
  in Computer Science (LICS)}}, \bibinfo{publisher}{IEEE}, pp.
  \bibinfo{pages}{387--398}, \doi{10.1109/LICS.1991.151662}.

\bibitemdeclare{inproceedings}{Bijl2004}
\bibitem{Bijl2004}
\bibinfo{author}{H.~M. {van der Bijl}}, \bibinfo{author}{A.~{Rensink}} \&
  \bibinfo{author}{G.~J. {Tretmans}} (\bibinfo{year}{2004}):
  \emph{\bibinfo{title}{Compositional Testing with ioco}}.
\newblock In: {\sl \bibinfo{booktitle}{Formal Approaches to Software Testing
  (FATES)}}, {\sl \bibinfo{series}{LNCS}} \bibinfo{volume}{2931},
  \bibinfo{publisher}{Springer Verlag}, \bibinfo{address}{Berlin}, pp.
  \bibinfo{pages}{86--100}, \doi{10.1007/978-3-540-24617-6\_7}.

\end{thebibliography}

\clearpage

\end{document}